\documentclass[review,12pt]{elsarticle}

\usepackage{graphicx}
\usepackage{multirow}
\usepackage{amsmath,amssymb,amsfonts}
\usepackage{amsthm}
\usepackage{mathrsfs}
\usepackage[title]{appendix}
\usepackage{xcolor}
\usepackage{textcomp}
\usepackage{manyfoot}
\usepackage{booktabs}
\usepackage{algorithm}
\usepackage{algorithmicx}
\usepackage{algpseudocode}
\usepackage{listings}
\usepackage{tikz}
\usepackage{mathtools}
\usepackage{hyperref}
\usepackage{csquotes}
\usepackage{utfsym, adforn}
\usepackage{wrapfig}
\usepackage[utf8]{inputenc}
\usepackage{subcaption}
\usepackage{lineno}

\DeclareMathOperator*{\sinc}{sinc}

\theoremstyle{plain}
\newtheorem{theorem}{Theorem}
\newtheorem{proposition}[theorem]{Proposition}
\newtheorem{lemma}{Lemma}[section]

\theoremstyle{definition}
\newtheorem{definition}{Definition}

\theoremstyle{remark}
\newtheorem{remark}{Remark}

\journal{Signal Processing}

\begin{document}

\begin{frontmatter}

\title{Stability Analysis for Autoregressive Sampling Sets}

\author[1]{Daniele Gerosa\corref{cor1}}
\ead{danieleg@chalmers.se}
\cortext[cor1]{Corresponding author}

\author[1]{Thomas Eriksson}
\ead{thomase@chalmers.se}

\affiliation[1]{organization={Communication, Antennas and Optical Networks, Electrical Engineering Dept., Chalmers University of Technology},
    addressline={Chalmersplatsen 1},
    city={G\"oteborg},
    postcode={41296},
    country={Sweden}}

\begin{abstract}
Motivated by recent developments in stochastic modeling of clock jitter in Analog-to-Digital Converters (ADCs) as autoregressive processes of order one (AR(1)), we study the density and stability properties of AR(1)-jittered sampling sets for Paley-Wiener signals. We show that, despite having the correct asymptotic density both on average and almost surely, such sets almost surely fail to be stable sampling sets. We complement this negative result with a finite-dimensional analysis, showing that the corresponding jittered sinc matrices are nonetheless well-conditioned with high probability.
\end{abstract}

\begin{keyword}
autoregressive sampling \sep jitter \sep stable sampling
\end{keyword}

\end{frontmatter}

\section{Introduction}\label{sec:intro}
In signal processing, an analog waveform is typically modeled as a function of a continuous time parameter \(t \in \mathbb{R}\). Such functions, also called \emph{analog signals}, can be deterministic or random; in the latter case, we usually speak of continuous-time stochastic processes. Analog-to-digital converters (ADCs) are hardware systems that convert an analog signal into a digital representation, and one of the basic steps in this operation is \emph{sampling}. Samples may be described as instantaneous point evaluations of \(s\) at \(t_n\), i.e.\ \(s(t_n)\), or as averages over the sampling interval \(T_s > 0\), i.e.\ \( \frac{1}{T_s} \int_{nT_s}^{(n+1)T_s} s(t) \, dt  \) \cite{vetterlietal}. In this work, sampling will always be understood as point evaluation.

An analog signal \(s\) whose frequency spectrum is bounded can be fully reconstructed from its samples using the classical Shannon-Whittaker-Nyquist-Kotelnikov theorem (or some extension thereof), provided that \(1.\) all the infinitely many sample values \( \{s(t_n)\}_{n \in \mathbb{Z}} \) are accessible, \(2.\) the sampling interval \(T_s\) is sufficiently small, and \(3.\) the sampling instants \(t_n\) are either (a) exact multiples of \(T_s\), or (b) form an irregular sampling set satisfying suitable stability conditions; density conditions provide important necessary constraints, see e.g.\ Chapter 10 in \cite{higgins}. The theory concerning the reconstruction of bandlimited signals from samples taken on irregular time grids is by now mathematically mature, having been developed from multiple perspectives \cite{higgins,grochenig,margolis_eldar,strohmer_tanner,yen,seip_ulanovskii,bass_groechenig,bass_groechenig2}. As an example, the following theorem provides an explicit counterpart to the Shannon-Whittaker-Nyquist-Kotelnikov theorem for irregular sampling grids:

\begin{theorem}[10.9-10.11 in \cite{higgins}] \label{thm:Kadec} Let \( \{ t_n \}_{n \in \mathbb{Z}} \subseteq \mathbb{R} \) satisfy the Kadec's ``one quarter'' condition 

\begin{equation}
|n-t_n|\le D < 1/4
\tag*{[K]}
\end{equation}
for all \( n \in \mathbb{Z} \), and let \(s : \mathbb{R} \to \mathbb{R} \) such that \( \text{supp}(\mathcal{F}(s)) \subseteq [-1/2,1/2] \). Then \[ s(t) = \sum_{ n \in \mathbb{Z}} s(t_n) \frac{L(t)}{L'(t_n) (t-t_n)} \] 
in norm and globally uniformly over \( \mathbb{R} \), with \begin{equation} \label{eq:Lagr_infProd} L(t) = (t-t_0) \prod_{n=1}^\infty \left(1 - \frac{t}{t_n} \right) \left(1 - \frac{t}{t_{-n}} \right). \end{equation}
\end{theorem}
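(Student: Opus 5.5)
The plan is the classical route through Riesz bases of complex exponentials and the Paley--Wiener correspondence. Write \(PW\) for the space of \(L^2\) functions whose spectrum lies in \([-1/2,1/2]\) (normalized so that the integer samples are Nyquist-rate). By Plancherel, \(s(t)=\int_{-1/2}^{1/2}\hat s(\xi)e^{2\pi i t\xi}\,d\xi\), so \(s(t_n)=\langle \hat s, e_{-t_n}\rangle_{L^2[-1/2,1/2]}\) with \(e_\lambda(\xi)=e^{-2\pi i\lambda\xi}\). The first step is Kadec's \(1/4\) theorem: under [K], the system \(\{e^{2\pi i t_n\xi}\}_{n\in\mathbb Z}\) is a Riesz basis of \(L^2[-1/2,1/2]\). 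I would prove this in the standard way. Write \(e^{2\pi i t_n\xi}-e^{2\pi i n\xi}=e^{2\pi i n\xi}\bigl(e^{2\pi i\delta_n\xi}-1\bigr)\) with \(\delta_n=t_n-n\). Expand \(e^{2\pi i\delta_n\xi}-1\) in the orthogonal system \(\{1,\cos(2\pi k\xi),\sin(2\pi(k-\tfrac12)\xi)\}\) on \([-1/2,1/2]\). Then bound the operator \(T\) defined by \(e^{2\pi i n\xi}\mapsto e^{2\pi i t_n\xi}\) via \(\|I-T\|\le 1-\cos(\pi D)+\sin(\pi D)\). This quantity is \(<1\) exactly when \(D<1/4\), so \(T\) is an isomorphism by a Neumann series (the Paley--Wiener stability criterion). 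This estimate is the main obstacle. The delicate point is using the Fourier expansion in \(\xi\), whose coefficients are uniform in \(\delta\) for \(|\delta|\le D\), to control the sum \(\|\sum_n c_n(e^{2\pi i t_n\xi}-e^{2\pi i n\xi})\|\) by \(\|c\|_{\ell^2}\) times an absolutely summable series of coefficients.

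Given the Riesz basis property, there is a unique biorthogonal Riesz basis \(\{g_n\}\), and \(\hat s=\sum_n \langle \hat s,e_{-t_n}\rangle g_n\) converges in \(L^2\). Transferring back by the inverse Fourier transform gives \(s=\sum_n s(t_n)G_n\) in \(PW\)-norm, where \(G_n=\mathcal F^{-1}g_n\in PW\) satisfies \(G_n(t_m)=\delta_{nm}\). Next I identify \(G_n\) with the Lagrange-type functions in the statement. The product \eqref{eq:Lagr_infProd} converges locally uniformly, because pairing the factors gives \(\bigl(1-\tfrac{t}{t_n}\bigr)\bigl(1-\tfrac{t}{t_{-n}}\bigr)=1-t\bigl(\tfrac1{t_n}+\tfrac1{t_{-n}}\bigr)+\tfrac{t^2}{t_nt_{-n}}\) with \(\tfrac1{t_n}+\tfrac1{t_{-n}}=O(n^{-2})\) by [K]. (If some \(t_n=0\) for \(n\neq0\), first translate \(t\); by [K], only \(t_0\) can be near \(0\).) So \(L\) is entire, and comparison with \(\sin(\pi t)/\pi=t\prod(1-t^2/n^2)\) shows it is of exponential type \(\pi\). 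Moreover \(L(t)/(t-t_n)\) is in \(L^2(\mathbb R)\); this follows from Kadec's estimate \(|L(x)|\lesssim (1+|x|)^{4D}\,\mathrm{dist}(x,\{t_n\})\)-type bounds, or more cleanly from the fact that a zero set of a Riesz basis of exponentials gives a generating function whose quotients \(L(t)/(t-t_n)\) lie in \(PW\). Then \(H_n(t)=L(t)/(L'(t_n)(t-t_n))\in PW\) and \(H_n(t_m)=\delta_{nm}\). Completeness (uniqueness) of \(\{t_n\}\) for \(PW\), which follows from the Riesz basis property, forces \(H_n=G_n\).

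Finally I upgrade the convergence. The partial sums converge in \(PW\), and point evaluation on \(PW\) is bounded uniformly in \(t\): \(|f(t)|\le \|\hat f\|_{L^1}\le\|\hat f\|_{L^2}=\|f\|_2\) by Cauchy--Schwarz on an interval of length \(1\). Hence \(\sup_t|s(t)-\sum_{|n|\le N}s(t_n)H_n(t)|\le \|s-\sum_{|n|\le N}s(t_n)H_n\|_2\to0\), which gives uniform convergence on all of \(\mathbb R\). Real-valuedness of \(s\) plays no role beyond being a special case.
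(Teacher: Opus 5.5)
The paper does not prove this statement; it quotes it from Higgins. Your outline is the standard route to it: Kadec's theorem via the Paley--Wiener perturbation bound \(\|I-T\|\le 1-\cos\pi D+\sin\pi D<1\), expansion in the biorthogonal Riesz basis, identification of the biorthogonal functions with the Lagrange-type quotients, and uniform convergence from \(|f(t)|\le\|f\|_2\) on \(\mathrm{PW}\).

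The Kadec step and the final upgrade to uniform convergence are fine as sketched. In the Kadec step, what is actually used is that each expansion coefficient of \(1-e^{2\pi i\delta\xi}\) has modulus nondecreasing in \(|\delta|\) on \([0,1/4]\), so it is dominated by its value at \(\delta=D\).

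The step that does not hold up is the identification. Your uniqueness argument needs \(H_n\in\mathrm{PW}\) \emph{before} it can conclude \(H_n=G_n\), so it needs \(L(t)/(t-t_n)\in L^2(\mathbb R)\) on its own. The Levinson-type bound you invoke, \(|L(x)|\lesssim(1+|x|)^{4D}\), does not deliver this; the factor \(\mathrm{dist}(x,\{t_m\})\) is bounded and gives nothing at infinity. It only yields \(|L(x)/(x-t_n)|\lesssim(1+|x|)^{4D-1}\), which is square integrable only when \(D<1/8\), so the range \(1/8\le D<1/4\) is not covered. Your alternative justification, that the generating function of a Riesz basis of exponentials has quotients in \(\mathrm{PW}\), is exactly the identification you are trying to prove. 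As written, this step is unsupported.

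The fix is to run the argument in the other direction, so that \(L^2\) membership becomes a consequence rather than a prerequisite.

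Put \(F_n(z)\coloneqq(z-t_n)G_n(z)\). Completeness of \(\{e^{2\pi i t_m\xi}\}\) shows that \(F_n\) vanishes exactly at the points \(t_m\), all simply. An extra zero \(\mu\), or a double zero at some \(t_m\) with \(m\ne n\), would make \(F_n(z)/(z-\mu)\), resp.\ \(F_n(z)/(z-t_m)\), a nonzero element of \(\mathrm{PW}\) vanishing on all of \(\{t_m\}\). At \(t_n\) itself, \(F_n'(t_n)=G_n(t_n)=1\).

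Since \(F_n\) and \(L\) are entire of exponential type with the same simple zeros, Hadamard's theorem gives \(F_n=e^{a+bz}L\). Two estimates then control the exponential factor:
\(|G_n(x+iy)|\le\|G_n\|_2\,e^{\pi|y|}\), by the same Cauchy--Schwarz argument you use at the end; and the lower half of Levinson's estimate, \(|L(x+iy)|\gtrsim(1+|z|)^{-4D}e^{\pi|y|}\) for \(|y|\ge1\).
Together they give \(|e^{a+bz}|\lesssim(1+|z|)^{1+4D}\) on \(|\mathrm{Im}\,z|\ge1\). Testing along the line \(\mathrm{Im}\,z=1\) and along the imaginary axis forces \(b=0\).

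Hence \((t-t_n)G_n(t)=e^{a}L(t)\), and differentiating at \(t_n\) gives \(e^{a}=1/L'(t_n)\). So \(G_n=H_n\), and \(H_n\in\mathrm{PW}\) comes for free. The polynomial losses \((1+|z|)^{\pm4D}\) are harmless here, because only polynomial control is needed.

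Two minor slips. With your convention \(e_\lambda(\xi)=e^{-2\pi i\lambda\xi}\), the sample \(s(t_n)\) pairs with \(e_{t_n}\), not \(e_{-t_n}\). And no translation is ever needed, since [K] gives \(|t_n|\ge|n|-D>3/4\) for \(n\ne0\).
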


Although the product representation in \eqref{eq:Lagr_infProd} is generally not explicit, it reduces to closed-form expressions in some special cases \cite{higgins}. An extensive compendium of sampling theorems can be found in \cite{Zayed}, chapters \(2\) and \(3\). 

In concrete applications, hardware practitioners still tend to rely on the standard sinc interpolation series, mostly for reasons of complexity and computational cost \cite{tertinek_vogel}. Furthermore, the jittered sampling instants \(t_n\) required by Theorem \ref{thm:Kadec} are unknown and must be estimated. There is also significant evidence, both from measurements and from the properties of oscillators, that the sampling instants \(t_n\) are random in nature \cite{Zanchi_Samori} and should therefore be modeled as a stochastic process. Building on \cite{Towfic_Ting_Sayed,Chang_Lin_Wang_Lee_Shih}, the authors of \cite{gerosaetal} recently proposed to model the correlated component of the stochastic clock jitter caused by oscillator imperfections in ADCs using a two-sided discrete autoregressive process of order \(1\) (AR(1)), so that \( t_n = T_s n + \xi_n \), where 
\begin{equation} \label{eq:two-sidedAR1} \xi_n = \sum_{k=0}^\infty \varphi^k \epsilon_{n-k}  \end{equation} 
and \( \{ \epsilon_n\}_{n \in \mathbb{Z} } \) is a sequence of independent and identically distributed Gaussian random variables, \( \epsilon_n \sim \mathcal{N}(0, \sigma_\epsilon ^2) \) for all \(n \in \mathbb{Z} \), with \(0<\varphi<1\) and \(\varphi\) typically close to \(1\). The process \eqref{eq:two-sidedAR1} has power spectral density \cite{Kay} 
\[ \mathcal{S}_\xi (e^{i \omega}) = \frac{\sigma_\epsilon ^2}{1+\varphi^2 - 2 \varphi \cos(\omega)}, \quad \omega \in [-\pi,\pi], \]
so that requiring \( \varphi \approx 1 \) correctly encodes the fact that this jitter is ``slowly varying'' and that much of its energy is therefore concentrated at low frequencies.

Since \( \sum_{k \ge 0} \mathrm{var}(\varphi^k \epsilon_{n-k}) = \sigma_\epsilon ^2 \sum_{k \ge 0} \varphi^{2k} < \infty \), the Khintchine-Kolmogorov convergence theorem ensures that the infinite sum in \eqref{eq:two-sidedAR1} converges almost everywhere for every \(n\), and thus the \( \xi_n \) are well-defined. Lévy's continuity theorem then yields \( \xi_n \sim \mathcal{N}(0, \sigma_\epsilon ^2 / (1 - \varphi^2)) \) for all \(n \in \mathbb{Z} \).

\subsection{Motivations and basic definitions}
This note was philosophically inspired by \cite{bass_groechenig}, and its purpose is to investigate whether the autoregressively jittered lattice \(t_n = n+\xi_n\) retains the stable sampling properties of the integer lattice. We show that, despite having the correct asymptotic density (Lemmas \ref{lemma:exp_decay} and \ref{lemma:exp_decay_AR1}), this random set is almost surely not a stable sampling set for the full Paley-Wiener space (Theorem \ref{thm:ar_jitt_notstable}). We then turn to the finite-dimensional formulation motivated by ADC applications, in which the relevant object is the associated jittered sinc sampling matrix. In Section \ref{sec:fin_dim_analysis} we give sufficient conditions on the sample size \(N\) under which this sampling matrix is well-conditioned with high probability. 

We continue this subsection with a few basic definitions.

\begin{definition}[Fourier transform] The \emph{Fourier transform} \( \mathcal{F}(s)(\omega) \) of a function \( s: \mathbb{R} \to \mathbb{C} \) is formally defined as \[ \mathcal{F}(s)(\omega) \coloneqq  \int_{-\infty}^\infty s(t) e^{- 2 \pi i \omega t} \, dt.  \]
\end{definition}
\( \mathcal{F} \) defines a bounded linear operator \( L^1(\mathbb{R}) \to L^\infty(\mathbb{R}) \) that can be extended to all of \( L^2(\mathbb{R}) \) by standard density arguments. If \(B \subseteq \mathbb{R} \) is a union of finitely many measurable sets of finite and positive Lebesgue measure, we define 

\begin{definition}[Paley-Wiener space] The \emph{Paley-Wiener space} is the set \[ \text{PW}_{\text{B}} \coloneqq \{s \in C(\mathbb{R}) \cap L^2(\mathbb{R}) \, : \, \text{supp}(\mathcal{F}(s)) \subseteq \text{cl}(B)  \}. \]
\end{definition}
In sampling theory, one basic question is whether a signal can be recovered from its samples. For applications, however, uniqueness alone may not be enough: the recovery should also be stable with respect to measurement, numerical, or modeling errors. In other words, a signal with large \(L^2\)-energy should not be almost ``invisible'' on the sampling set. This robustness requirement is
encoded in the notion of \emph{stable sampling}.
\begin{definition}[Stable sampling] \label{eq:stable_sampling} A set \( \{t_n\} \subseteq \mathbb{R} \) is of \emph{stable sampling for \(\emph{PW}_\emph{B}\)} if \[ s \in \text{PW}_\text{B} \Longrightarrow \|s\|_{L^2} \le K \|s(t_n)\|_{\ell^2} \]with \(K\) independent of \(s\). 
\end{definition}
Sampling stability for Paley-Wiener functions is connected to the notion of \emph{density}, at least through a theorem of H.~J.~Landau \cite{Landau}, which states that the density \eqref{eq:set_density} of a sampling set must exceed the Lebesgue measure of \(B\). In this note we adopt the following classical definition of density.

\begin{definition}[Sampling set density] The density \(D(t_n)\) of a sampling set \( \{ t_n \} \subseteq \mathbb{R} \) is defined as 
\begin{equation} \label{eq:set_density} D(t_n) \coloneqq \lim_{r \to \infty} \frac{\text{card}(\{t_n \, : \, t_n \in [-r,r] \})}{2 r} \end{equation}whenever the latter limit exists.
\end{definition}
As observed by Bass and Gröchenig in \cite{bass_groechenig}, random sampling sets may, on average, attain the critical (lower Beurling) density required by Landau's results and still fail to be stable. The goal of the next section is to prove that this is precisely what happens for the AR(1)-jittered lattice. Lemmas \ref{lemma:exp_decay} and \ref{lemma:exp_decay_AR1} establish that the jittered sampling set achieves the ``correct'' critical density, whereas Theorem \ref{thm:ar_jitt_notstable} shows that, with probability \(1\), it nonetheless fails to be a stable sampling set. 

\subsection{Symbols and notations}
We adopt the notation of using uppercase boldface letters \( \mathbf{A} \) for matrices and lowercase boldface letters \( \mathbf{a} \) for column vectors. Due to the structural problem setup, we indicate with \(  \mathbf{a}_N  \) vectors belonging to \( \mathbb{R}^{2N+1} \) indexed from \( -N \) to \(N\). Similarly for \( (2N +1) \times (2N +1) \) matrices, that will be indicated by \( \mathbf{A}_N \) or \( \mathbf{A}^{(N)} \) if ambiguities may arise. We use \( \sigma_j (\mathbf{A}^{(N)}) \) for the \(j-\)th singular value of \( \mathbf{A}^{(N)}\), and \( \odot \) for the Hadamard component-wise product. The operator norm \( \| \mathbf{A}^{(N)} \|_{\text{op}} \) of a matrix equals its largest singular value \( \sigma_{\max} (\mathbf{A}^{(N)}) \).
    
\newpage
\section{Critical density and global stability failure}

\begin{lemma} \label{lemma:exp_decay} Let \( \{ \eta_n \}_{n \in \mathbb{Z}} \) be a sequence of Gaussian real-valued random variables with \(0\) mean and variance \( \sigma_\eta ^2  \). Define \( \theta_n \coloneqq n + \eta_n \). Then \[ \frac{\mathbb{E}[\emph{card}(\{\theta_n \, : \, \theta_n \in [-r,r] \})] }{2 r} = 1 + O(r^{-1}).  \]
\end{lemma}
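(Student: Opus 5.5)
By linearity of expectation the count splits into a sum of probabilities, and the plan is to show this sum equals the deterministic count $2r+O(1)$ plus a bounded Gaussian tail correction. Only the marginal distributions $\eta_n \sim \mathcal{N}(0,\sigma_\eta^2)$ enter, so no independence or correlation assumption is needed. First write
\[
\mathbb{E}[\mathrm{card}(\{\theta_n \in [-r,r]\})] = \sum_{n\in\mathbb{Z}} \mathbb{P}(-r \le n+\eta_n \le r) = \sum_{n \in \mathbb{Z}} \Bigl[\Phi\Bigl(\tfrac{r-n}{\sigma_\eta}\Bigr) - \Phi\Bigl(\tfrac{-r-n}{\sigma_\eta}\Bigr)\Bigr],
\]
where $\Phi$ is the standard normal CDF. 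Interchanging sum and expectation is justified by Tonelli, since all terms are nonnegative. (If $\sigma_\eta=0$ the statement is trivial: the count is $2\lfloor r\rfloor+1$.)

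Next, compare this sum with the deterministic count $\sum_n \mathbf{1}_{[-r,r]}(n) = 2\lfloor r \rfloor + 1 = 2r + O(1)$. For each $n$ the difference $\mathbb{P}(n+\eta_n\in[-r,r]) - \mathbf{1}_{[-r,r]}(n)$ is bounded in absolute value by the probability that $\eta_n$ crosses one of the endpoints:
\[
\mathbb{P}(\eta_n > r-n) + \mathbb{P}(\eta_n < -r-n) \quad \text{if } |n|\le r,
\]
and by
\[
\mathbb{P}(\eta_n \le r-n) \ \text{or}\ \mathbb{P}(\eta_n \ge -r-n) \quad \text{if } n>r \text{ or } n<-r.
\]
In every case this is at most $\mathbb{P}(|\eta_n| \ge \operatorname{dist}(n,\{-r,r\}))$, up to a factor 2 when both endpoints contribute. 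Summing over $n$, the total error is therefore at most
\[
2\sum_{n\in\mathbb{Z}} \bar\Phi\bigl(|n-r|/\sigma_\eta\bigr) + 2\sum_{n\in\mathbb{Z}} \bar\Phi\bigl(|n+r|/\sigma_\eta\bigr),
\]
with $\bar\Phi = 1-\Phi$.

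The key step is to bound $\sum_{n\in\mathbb{Z}} \bar\Phi(|n-a|/\sigma_\eta)$ uniformly in $a\in\mathbb{R}$. Split the sum into the at most two integers with $|n-a|<1$, which contribute at most 2, and the remaining integers. For the rest, use monotonicity of $\bar\Phi$ to compare with the integral:
\[
\sum_{n:\,|n-a|\ge 1} \bar\Phi\bigl(|n-a|/\sigma_\eta\bigr) \le 2\int_0^\infty \bar\Phi(x/\sigma_\eta)\,dx = 2\sigma_\eta\,\mathbb{E}[Z^+] = \sigma_\eta\sqrt{2/\pi}.
\]
Alternatively one can use the Gaussian tail bound $\bar\Phi(x) \le e^{-x^2/2}/2$. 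Either way the sum is bounded by a constant $C(\sigma_\eta)$ independent of $r$.

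Hence
\[
\mathbb{E}[\mathrm{card}] = 2r + O(1),
\]
and dividing by $2r$ gives $1+O(r^{-1})$. The only delicate point is this uniform-in-$r$ bound on the lattice sums of Gaussian tails near the endpoints $\pm r$; the remaining steps are bookkeeping.
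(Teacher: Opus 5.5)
Your proof is correct, but it takes a different route from the paper. The paper computes $\sum_n \mathbb{P}(|\theta_n|\le r)$ exactly. After Tonelli it exchanges sum and integral, recognizes $\sum_n e^{-(s-n)^2/(2\sigma_\eta^2)}$ as a Jacobi theta function, and applies its modular transformation (equivalently, Poisson summation) to obtain
\[
\mathbb{E}[\mathrm{card}(\{\theta_n \, : \, \theta_n \in [-r,r]\})] = 2r + \frac{2}{\pi}\sum_{n\ge 1} e^{-2\pi^2\sigma_\eta^2 n^2}\,\frac{\sin(2\pi n r)}{n}.
\]
You never evaluate the sum. Instead you compare each probability with the unjittered indicator $\mathbf{1}_{[-r,r]}(n)$. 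You then bound the total discrepancy by lattice sums of Gaussian tails around the endpoints $\pm r$, controlled uniformly in the endpoint by an integral comparison. All of these steps check out.

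The paper's route buys precision. It identifies the deviation from $2r$ as an explicit $1$-periodic oscillation with Fourier coefficients $e^{-2\pi^2\sigma_\eta^2 n^2}$, which is exponentially small for large jitter. The price is that it relies on the Gaussian form through the theta identity.

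Your route buys elementarity and generality. It works essentially verbatim, with $\bar\Phi$ replaced by the corresponding tail functions, for any jitter whose $\eta_n$ share a common law with finite first moment. The cost is a constant of order $1+\sigma_\eta$ that grows with the jitter, which is an artifact of comparing with the unjittered lattice.

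Pushing your opening observation (only the common marginal matters) one step further gives an even shorter proof. Replace every $\eta_n$ by a single $\eta\sim\mathcal{N}(0,\sigma_\eta^2)$ and apply Tonelli; the expected count becomes $\mathbb{E}[\mathrm{card}(\mathbb{Z}\cap[-r-\eta,r-\eta])]$. A closed interval of length $2r$ contains between $2r-1$ and $2r+1$ integers, so the expected count lies in $[2r-1,2r+1]$. This version needs no moment assumption and gives a constant uniform in $\sigma_\eta$.
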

\begin{proof}
Consider the nonnegative random variables \( \mathbf{1}_{ \{ |\theta_n| \le r \} }.  \) By the Tonelli theorem \( \mathbb{E}[ \sum_{n \in \mathbb{Z}  } \mathbf{1}_{ \{ |\theta_n| \le r \}} ] = \sum_{n \in \mathbb{Z}} \mathbb{E}[\mathbf{1}_{ \{ |\theta_n| \le r \}}] \) and \begin{equation} \label{eq:sumexp} \begin{split} \sum_{n \in \mathbb{Z}} \mathbb{E}[\mathbf{1}_{ \{ |\theta_n| \le r \}}] & = \frac{1}{\sqrt{2 \pi}\sigma_\eta } \sum_{n \in \mathbb{Z}} \int_{-r}^r e^{- \frac{(s - n)^2}{2 \sigma_\eta ^2}} \, ds \\ & = \frac{1}{\sqrt{2 \pi} \sigma_\eta} \int_{-r}^r \sum_{n \in \mathbb{Z}} e^{- \frac{(s - n)^2}{2 \sigma_\eta ^2}} \, ds \\& = \frac{1}{\sqrt{2 \pi} \sigma_\eta } \int_{-r}^r e^{- \frac{s^2}{2 \sigma_\eta ^2}} \boxed{ \sum_{n \in \mathbb{Z}} e^{\frac{-n^2 + 2sn}{2 \sigma_\eta ^2}}} \, ds. \end{split}  \end{equation}
The boxed term is the Jacobi's theta function \( \vartheta(z;\tau) \) with \( \tau = i/(2 \pi  \sigma_\eta ^2) \) and \( z = -is / (2  \sigma_\eta ^2) \). The whole integrand becomes then equal to \[ \begin{split}  & =  e^{-s^2 / (2 \sigma_\eta ^2)} \vartheta \left(- \frac{is}{2 \sigma_\eta ^2 }; \frac{i}{2 \pi \sigma_\eta^2} \right) \\ & = e^{-s^2 / (2 \sigma_\eta ^2)} \sqrt{2 \pi \sigma_\eta ^2} e^{s^2 / (2 \sigma_\eta^2)} \vartheta(\pi s; 2 \pi i \sigma_\eta ^2) \\ & = \sqrt{2 \pi} \sigma_\eta \vartheta(\pi s; 2 \pi i \sigma_\eta ^2)  \end{split}   \]by using modular properties of the Theta function. \eqref{eq:sumexp} becomes then \[ \begin{split} \int_{-r}^r \vartheta(\pi s; 2 \pi i \sigma_\eta ^2) \, ds & = \int_{-r}^r \left[1 + 2 \sum_{n=1}^\infty e^{-2 \pi^2 \sigma_\eta^2 n^2} \cos(2 \pi n s)  \right] \, ds \\ & = 2r + 4 \sum_{n=1}^\infty e^{-2 \pi^2 \sigma_\eta^2 n^2} \frac{\sin(2 \pi n r) }{2 \pi n}  \end{split} \]
from which we have that, as \(r \to \infty \), \[ \frac{\mathbb{E}[\text{card}(\{\theta_n \, : \, \theta_n \in [-r,r] \})] }{2 r} = 1 + O(r^{-1}).  \] 
\end{proof}

So far we have not exploited the autoregressive structure of our jitter. Indeed, Lemma \ref{lemma:exp_decay} assumes nothing more than Gaussianity of the \( \eta_n \). In the next lemma we show that, under a mild summability assumption on the autocorrelation function, the convergence to the correct density can be upgraded from expectation to almost sure convergence. 

\begin{lemma} \label{lemma:exp_decay_AR1}
Let \( \{ \xi_n \}_{n \in \mathbb{Z}} \) be a centered stationary Gaussian process with variance \( \sigma_\xi^2 \), and let
\[
\rho(k) \coloneqq \operatorname{Corr}(\xi_n,\xi_{n+k}), \qquad k \in \mathbb{Z},
\]
denote its autocorrelation function. Assume that
\[
\sum_{k \in \mathbb{Z}} |\rho(k)| < \infty.
\]
Define \( \theta_n \coloneqq n + \xi_n \). Then
\[
S(r) \coloneqq \frac{1}{2r} \sum_{n \in \mathbb{Z} } \mathbf{1}_{ \{ |\theta_n| \le r \} } \to 1
\]
a.s. as \( r \to \infty \).
\end{lemma}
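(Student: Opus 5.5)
Write $N(r) \coloneqq \sum_{n} \mathbf{1}_{\{|\theta_n|\le r\}}$, so that $S(r) = N(r)/(2r)$. The plan is a second-moment argument along a geometric (or polynomial) subsequence $r_k$, combined with monotonicity of $N(r)$ in $r$ to interpolate between consecutive terms.

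\textbf{Step 1: the mean.} By Lemma~\ref{lemma:exp_decay}, applied with $\eta_n = \xi_n$ and $\sigma_\eta = \sigma_\xi$, we have $\mathbb{E}[N(r)] = 2r + O(1)$. Only Gaussianity of the $\xi_n$ is used here.

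\textbf{Step 2: the variance bound.} The key intermediate claim is
\[
\operatorname{Var}(N(r)) \le C r
\]
for a constant $C$ depending on $\sigma_\xi$ and on $\sum_k|\rho(k)|$. Write $A_n = \{|\theta_n|\le r\}$, so that
\[
\operatorname{Var}(N(r)) = \sum_{n,m}\operatorname{Cov}(\mathbf{1}_{A_n},\mathbf{1}_{A_m}).
\]

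First I discard the indices far outside the window. For $|n| > r + r^{1/2}$ (say), $\mathbb{P}(A_n)$ is bounded by a Gaussian tail $e^{-c(|n|-r)^2}$. Using $|\operatorname{Cov}| \le \min(\mathbb{P}(A_n),\mathbb{P}(A_m))$, the terms involving such $n$ or $m$ contribute at most
\[
\sum_{n}\sum_{m} \sqrt{\mathbb{P}(A_n)\mathbb{P}(A_m)},
\]
where $n$ ranges over the far indices and $m$ over all of $\mathbb{Z}$; this is $O(r)$ (indeed $O(1)\cdot O(r)$). A cleaner alternative is to bound the whole double sum at once by $\sum_{n,m} \min(\cdot)$ combined with a correlation factor.

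For the remaining $O(r)$ indices, I use the standard covariance inequality for functions of bivariate Gaussians: for indicators of Borel sets of a standard bivariate Gaussian pair $(X,Y)$ with correlation $\rho$,
\[
|\operatorname{Cov}(\mathbf{1}_{X\in A},\mathbf{1}_{Y\in B})| \le |\rho|.
\]
This follows from the Gebelein/maximal-correlation inequality, since the maximal correlation of a Gaussian pair equals $|\rho|$, and the indicators have variance at most $1/4$. Summing over $|n|,|m|\lesssim r$ gives
\[
\sum_{n,m}|\rho(n-m)| \le (2r+O(1))\sum_k|\rho(k)| = O(r).
\]

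\textbf{Step 3: Borel--Cantelli along a subsequence.} Chebyshev gives
\[
\mathbb{P}\bigl(|N(r)-\mathbb{E}N(r)|>\varepsilon r\bigr)\le \frac{C}{\varepsilon^2 r},
\]
which is not summable along integers. I therefore take $r_k = k^2$, for which the bounds are summable, so Borel--Cantelli yields $S(r_k)\to 1$ almost surely. For $r_k \le r \le r_{k+1}$, monotonicity gives
\[
\frac{N(r_k)}{2r_{k+1}} \le S(r) \le \frac{N(r_{k+1})}{2r_k},
\]
and since $r_{k+1}/r_k\to 1$ the full limit follows.

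\textbf{Expected obstacle.} The main obstacle is Step 2: justifying the uniform $|\rho|$ covariance bound for indicators, and controlling the infinitely many indices with tiny probabilities so that the double sum stays $O(r)$ rather than diverging. If the maximal-correlation route feels too heavy, I would instead use Mehler's expansion (Hermite polynomials) of the bivariate Gaussian density, which gives the bound $|\rho|\sqrt{\mathbb{P}(A_n)\mathbb{P}(A_m)}$ directly. That version handles the far indices automatically, since $\sum_n\sqrt{\mathbb{P}(A_n)} = O(r)$ and hence
\[
\sum_{n,m}|\rho(n-m)|\sqrt{\mathbb{P}(A_n)\mathbb{P}(A_m)} \le \sum_k|\rho(k)|\sum_n \mathbb{P}(A_n)^{1/2}\mathbb{P}(A_{n+k})^{1/2} = O(r)
\]
by Cauchy--Schwarz.
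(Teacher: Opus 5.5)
Your proof is correct. It shares the paper's skeleton: Gebelein's maximal-correlation bound for the covariances, then Chebyshev, Borel--Cantelli, and a monotone sandwich. It differs in the key quantitative estimate.

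\textbf{The paper's route.} The paper keeps the full factor $\sqrt{\operatorname{var}(\mathbf{1}_{A_m})\operatorname{var}(\mathbf{1}_{A_n})}$ with $\operatorname{var}(\mathbf{1}_{A_n}) = p_{n,r}(1-p_{n,r})$. Using a periodization argument and an erfc tail bound, it shows that $\sum_n p_{n,r}(1-p_{n,r})$ is bounded independently of $r$, since only indices within Gaussian-tail distance of $\pm r$ contribute. After AM--GM this gives $\operatorname{var}(N(r)) = O(1)$, i.e.\ $\operatorname{var}(S(r)) = O(r^{-2})$. That is already summable along the integers, so Borel--Cantelli is applied to $S(n)$ directly, and the sandwich uses $\lfloor r\rfloor \pm 1$.

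\textbf{Your route.} You drop the factor $1-\mathbb{P}(A_n)$, bounding the variance by $1/4$ or, in your Mehler version, by $\mathbb{P}(A_n)$. Had you kept it, you would have recovered the paper's bound. As it stands you get only $\operatorname{var}(N(r)) = O(r)$. You compensate with the sparser subsequence $r_k = k^2$; since $r_{k+1}/r_k \to 1$, the monotone interpolation still goes through.

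\textbf{What each buys.} Your route is slightly more elementary. It needs only $\sum_n \mathbb{P}(A_n) = \mathbb{E}[N(r)] = O(r)$ from Lemma~\ref{lemma:exp_decay}, not the boundary-localization estimate. It would also survive in settings where the number variance genuinely grows linearly. The paper's route yields the stronger structural fact that the count in $[-r,r]$ has bounded variance. Its fluctuations around $2r$ are therefore $O(1)$ in $L^2$ rather than $O(\sqrt{r})$.

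A harmless slip: with your cutoff $|n| \le r + r^{1/2}$, the number of near indices is $2r + O(\sqrt{r})$, not $2r + O(1)$.
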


\begin{proof}
We study \( \text{var}(S(r)) \) for a fixed \(r\). We have
\begin{equation}\label{eq:varsplit}
\begin{split}
\operatorname{var}(S(r))
&= \frac{1}{4r^2}
\left[
\sum_{n\in\mathbb{Z}} \operatorname{var}\!\bigl(\mathbf{1}_{\{|\theta_n|\le r\}}\bigr) + 2\sum_{m<n} \operatorname{cov}\!\bigl(\mathbf{1}_{\{|\theta_m|\le r\}},
\mathbf{1}_{\{|\theta_n|\le r\}}\bigr)
\right] \\
&= \frac{1}{4r^2}
\left[
\sum_{n\in\mathbb{Z}} p_{n,r}(1-p_{n,r}) + 2\sum_{m<n} \operatorname{cov}\!\bigl(\mathbf{1}_{\{|\theta_m|\le r\}},
\mathbf{1}_{\{|\theta_n|\le r\}}\bigr)
\right].
\end{split}
\end{equation}
with
\[
p_{n,r} = \frac{1}{\sqrt{2 \pi} \sigma_\xi  } \int_{-r}^r e^{ - \frac{(s-n)^2}{2 \sigma_\xi^2 } } \, ds.
\]
To deal with the term \( \sum_n p_{n,r}(1 - p_{n,r}) \) we modify the argument in \eqref{eq:sumexp}:

\[
\begin{split}
p_{n,r}(1 - p_{n,r})
&= \frac{1}{2 \pi \sigma_{\xi} ^2} \int_{-r}^r \int_{\mathbb{R} \setminus [-r,r] } e^{ -\frac{(s-n)^2}{2 \sigma_\xi ^2} } e^{ -\frac{(t-n)^2}{2 \sigma_\xi ^2} } \, ds \, dt \\
&= \frac{1}{2 \pi \sigma_{\xi} ^2} \int_{-r}^r \int_{\mathbb{R} \setminus [-r,r] } e^{ -\frac{(s-t)^2}{4 \sigma_\xi ^2} } e^{ -\frac{((t+s)/2-n)^2}{ \sigma_\xi ^2} } \, ds \, dt.
\end{split}
\]
and
\[
\sum_{n \in \mathbb{Z}} e^{-((t+s)/2-n)^2/\sigma_\xi^2} \le C(\sigma_\xi)
\]
because the infinite sum defines a continuous and \(1\)-periodic function. Therefore
\begin{equation} \label{eq:varest}
\sum_{n} p_{n,r}(1 - p_{n,r}) \le C_1(\sigma_\xi) \int_{-r}^r \int_{\mathbb{R} \setminus [-r,r] } e^{ -\frac{(s-t)^2}{4 \sigma_\xi ^2} } \, ds \, dt.
\end{equation}
To bound the right-hand side of \eqref{eq:varest} it is enough to observe that
\[
\int_{r}^{\infty}e^{ -(s-t)^2/(4 \sigma_\xi ^2) } \, ds \lesssim e^{ -(r-t)^2/(4 \sigma_\xi ^2) }
\]
(and similarly for \( \int_{-\infty}^{-r} \)) as a consequence of elementary bounds of the Error function, and hence \eqref{eq:varest} is bounded from above by a constant independent of \(r\).

We need now to estimate the series of covariances in \eqref{eq:varsplit}, and to do so we make use of Gebelein \cite{gebelein_orig} as well as the AM-GM inequalities:
\[
\begin{split}
| \operatorname{cov}(\mathbf{1}_{ \{ |\theta_m|  \le r \} },\mathbf{1}_{ \{ |\theta_n| \le r \} }) |
&\le |\rho(n-m)| \sqrt{ \operatorname{var}(\mathbf{1}_{ \{ |\theta_m| \le r \} }) \operatorname{var}(\mathbf{1}_{ \{ |\theta_n| \le r \} }) } \\
&\le |\rho(n-m)| \frac{\operatorname{var}(\mathbf{1}_{ \{ |\theta_m| \le r \} }) + \operatorname{var}(\mathbf{1}_{ \{ |\theta_n| \le r \} })}{2}.
\end{split}
\]
Thus
\[
\sum_{m<n} |\rho(n-m)| \operatorname{var}(\mathbf{1}_{ \{ |\theta_m| \le r \} })
= \sum_{m \in \mathbb{Z}}\operatorname{var}(\mathbf{1}_{ \{ |\theta_m| \le r \} }) \sum_{h=1}^{\infty} |\rho(h)| \le C_2(\sigma_\xi,\rho),
\]
and similarly for the series with \( \operatorname{var}(\mathbf{1}_{ \{ |\theta_n| \le r \} })\). We can finally conclude that
\begin{equation} \label{eq:varineq}
\operatorname{var}(S(r)) \le C_3( \sigma_\xi, \rho) / r^2,
\end{equation}
which together with Lemma \ref{lemma:exp_decay} implies that \( S(r) \to 1 \) in variance, and hence in probability. Convergence almost everywhere requires a bit more work: we first show that \( S(n) \to 1 \) almost everywhere along integers, and then extend the result to arbitrary sequences \( r_k \to \infty \).

As a consequence of Chebyshev's inequality and \eqref{eq:varineq}, for a fixed \( \epsilon > 0 \) we have that
\[
\mathbb{P}[\{ |S(n) -  \mathbb{E}[S(n)] | > \epsilon\} ] \lesssim \frac{1}{\epsilon^2 n ^2}
\]
and thus
\[
\sum_{n \ge 1} \mathbb{P}[\{ |S(n) -  \mathbb{E}[S(n)] | > \epsilon\} ] < \infty.
\]
The Borel-Cantelli lemma then gives
\begin{equation} \label{eq:borelcantelli}
\mathbb{P} \left[ \bigcap_{n=1}^{\infty} \bigcup_{k=n}^\infty \{ |S(k) -  \mathbb{E}[S(k)] | > \epsilon\}    \right] = 0,
\end{equation}
which implies \( S(n) - \mathbb{E}[S(n)] \to 0 \) almost surely. Combined with Lemma \ref{lemma:exp_decay}, this yields \(  S(n) \to 1 \) almost surely.

To conclude the argument, pick a sequence \( r_k \to \infty \) and define the two integers \( m_{k}^{-} = \lfloor r_k \rfloor - 1 \) and \( m_k ^{+} = \lfloor r_k  \rfloor + 1 \), so that \( m_{k}^{-} \le r_k \le m_k ^{+} \) for all \(k\) (the \(m\)'s depend, of course, on the sequence \(r_k\)). Then
\[
S(r_k) \le \frac{1}{2 r_k} \sum_{n \in \mathbb{Z}} \mathbf{1}_{ \{ |\theta_n| \le m_k ^{+} \} } = \frac{m_k ^{+}}{r_k} S(m_k ^{+})
\]
and similarly
\[
S(r_k) \ge \frac{m_{k}^{-}}{r_k} S(m_{k}^{-}).
\]
Since by construction both \( m_{k}^{-} / r_k \) and \( m_k ^{+} / r_k \) tend to \(1\) and \( S(m_{k}^{-}) \), \( S(m_k ^{+}) \) converge to \(1\) a.s., then so does \( S(r_k) \).
\end{proof}

\begin{remark}
Strictly speaking, \( 2 r S(r) \) and \( \mathrm{card}(\{\theta_n \, : \, \theta_n \in [-r,r] \})  \) are not exactly the same quantity; however, since \( \bigcap_{n \ne m} \{ \theta_n \ne \theta_m \} \) has probability \(1\), the two agree almost surely for every \(r>0\).
\end{remark}

We are now in position to prove the main negative result of this section. In the following theorem we adapt the construction of Proposition 2.2 in \cite{bass_groechenig} to show that the stable sampling condition introduced above is violated by the autoregressively jittered lattice. The bandlimited functions used in the counterexample are precisely those of \cite{bass_groechenig}; the proof strategy, however, must be adapted to account for the fact that the random variables \(\xi_n\) are no longer independent.

\begin{theorem} \label{thm:ar_jitt_notstable}
Let \( \theta_n(\omega) \coloneqq n+\xi_n(\omega) \), where \(\{\xi_n\}_{n\in\mathbb Z}\)
is the two-sided Gaussian AR(1) process defined in \eqref{eq:two-sidedAR1}, with
\(0<\varphi<1\). Then with probability \(1\) there exists a sequence of bandlimited signals \( \{s_k^\omega\}_{k \in \mathbb{N} } \subseteq \mathrm{PW}_{[-1/2,1/2]}  \) such that \[ \sum_{n\in\mathbb Z} |s_k^\omega (\theta_n(\omega))|^2 \le \frac{\| s_k^\omega\|_2 ^2}{k} \quad \forall \, k.  \] 
Consequently, the jittered lattice is almost surely not a stable sampling set for \(\mathrm{PW}_{[-1/2,1/2]}\).
\end{theorem}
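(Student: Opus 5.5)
The plan is to follow the Bass--Gröchenig construction: locate, almost surely, arbitrarily long windows of consecutive indices on which every jittered point \(\theta_n\) lies very close to \(n+1/2\), that is, in the middle of a gap of the integer lattice. On such a window we place a bandlimited bump that vanishes at the half-integers. Concretely, for a length \(L\) and a center \(c\in\mathbb Z\) we take
\[
s_{L,c}(t) = \cos(\pi(t-c)) \, g_L(t-c),
\]
where \(g_L(t)=L^{-1/2}\,\sinc^2(\cdot)\)-type kernels are chosen so that \(\widehat{s_{L,c}}\) is supported in \([-1/2,1/2]\). One admissible form is \(s_{L,c}(t)=\cos(\pi (t-c))\,h(t/L)\) with \(\widehat h\) supported in a small interval, which can be achieved by slightly narrowing the carrier frequency. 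This \(s_{L,c}\) has \(\|s_{L,c}\|_2^2\asymp L\). It is of size \(O(|t-c-n-1/2|)\) near the half-integers \(t=c+n+1/2\) with \(|n|\le L\), and it decays rapidly for \(|t-c|\gg L\). We use exactly the functions of \cite{bass_groechenig}.

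Step 1 is the energy estimate. Suppose that on the window \(|n-c|\le L^2\) we have \(|\theta_n-(n+1/2)|\le \delta\). Then \(\sum_{|n-c|\le L^2}|s(\theta_n)|^2\lesssim \delta^2 L\) plus a tail contribution \(\sum_{|n-c|>L^2}|s(\theta_n)|^2\). The tail is controlled by the decay of \(h\) together with the a.s. growth bound \(|\xi_n|\le C\sqrt{\log|n|}\), which follows from Borel--Cantelli and Gaussian tails. Choosing \(\delta\) small and \(L\) large makes the ratio \(\sum_n|s(\theta_n)|^2/\|s\|_2^2\) less than \(1/k\).

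Step 2 is the probabilistic core and the main obstacle, because the \(\xi_n\) are dependent. The event \(E_c=\{|\xi_n-1/2|\le\delta \ \forall\, |n-c|\le M\}\) is not a product of independent events. I would exploit the AR(1) recursion \(\xi_n=\varphi\xi_{n-1}+\epsilon_n\) directly. \(E_c\) contains the event that \(\xi_{c-M}\in[1/2-\delta/2,1/2+\delta/2]\) and that each \(\epsilon_n\) lies in a small interval around \((1-\varphi)/2\), of width about \(\delta(1-\varphi)/2\), so that the recursion stays within \(\delta\) of \(1/2\). Given \(\xi_{c-M}\), this has probability at least \(p^{2M}\) for some \(p=p(\delta,\varphi,\sigma_\epsilon)>0\), uniformly, since the Gaussian density is bounded below on compacts. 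To obtain infinitely many occurrences a.s., we pick centers \(c_j\) spaced far apart and use the Markov property of AR(1). Conditionally on \(\mathcal F_{c_j-M-R}\), the variable \(\xi_{c_j-M}\) has a Gaussian law whose mean is \(\varphi^R\xi_{c_j-M-R}\). Its variance is close to the stationary one, so it gives probability at least \(q>0\) to the starting interval whenever \(|\xi_{c_j-M-R}|\) is bounded, which holds infinitely often. The conditional Borel--Cantelli lemma (Lévy's extension) then yields that \(E_{c_j}\) occurs infinitely often a.s. Alternatively, one can argue via ergodicity of the stationary mixing AR(1) process: a positive-probability cylinder event recurs a.s.

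Step 3 assembles the pieces. Take a countable family \((\delta_k,L_k)\). For each \(k\), the a.s. event of Step 2 provides a center \(c_k(\omega)\), and we set \(s_k^\omega=s_{L_k,c_k(\omega)}\). Intersecting countably many full-measure events, together with the growth bound used in Step 1, gives the claim simultaneously for all \(k\). The failure of Definition \ref{eq:stable_sampling} follows immediately, since no finite \(K\) can satisfy \(\|s_k^\omega\|_2^2\le K^2\|s_k^\omega\|_2^2/k\) for all \(k\).
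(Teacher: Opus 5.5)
\textbf{There is a genuine gap: the configuration you target is not a bad one.} Points with $\theta_n\approx n+1/2$ form a small perturbation of the translated lattice $\mathbb Z+1/2$. Since $\{\sinc(\cdot-n-1/2)\}_{n\in\mathbb Z}$ is an orthonormal basis of $\mathrm{PW}_{[-1/2,1/2]}$, every $s$ in that space satisfies $\|s\|_2^2=\sum_n|s(n+1/2)|^2$.

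Moreover, $|s(\theta_n)-s(n+1/2)|\le\delta\sup_{|u-n-1/2|\le\delta}|s'(u)|$. Plancherel--P\'olya applied to $s'$, together with Bernstein's inequality, then gives $\bigl(\sum_n|s(\theta_n)-s(n+1/2)|^2\bigr)^{1/2}\lesssim\delta\|s\|_2$. Hence any $s$ whose energy sits well inside a window on which $|\theta_n-(n+1/2)|\le\delta$ satisfies $\sum_n|s(\theta_n)|^2\gtrsim\|s\|_2^2$ once $\delta$ is small.

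So no function exists with the two properties your Step 1 needs: small at the half-integers of the window $|n-c|\le L^2$, and energy $\asymp L$ concentrated in $|t-c|\lesssim L$. If the pattern held for every $n$ with $\delta<1/4$, Theorem~\ref{thm:Kadec} applied to $\theta_n-1/2$ would even make $\{\theta_n\}$ stable. This is also why $\cos(\pi(t-c))\,h((t-c)/L)$ cannot be repaired. Its spectrum leaves $[-1/2,1/2]$. If you lower the carrier to $1/2-a/L$ (with $\operatorname{supp}\widehat h\subseteq[-a,a]$), the zeros shift by about $2a|t-c|/L$, which is of order one on the scale $L/a$ where the bump lives. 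Note also that the functions of \cite{bass_groechenig} do not vanish on half-integers, so saying you use exactly them contradicts your construction.

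\textbf{The missing idea.} The recurring event must collapse the jittered points onto a set of strictly smaller density. That leaves room for a Paley--Wiener function vanishing there with good decay. The paper requires $\xi_{2j}\approx0$ and $\xi_{2j+1}\approx1$ for $|j|\le N_k$, so every $\theta_n$ in the window lies within $\delta_k$ of $2\mathbb Z$. It then uses one fixed $S\in\mathrm{PW}_{[-1/2,1/2]}$ with $S(2j)=0$ and $|S(x)|\le a/(1+x^2)$, translated by $2\overline m$.
\begin{itemize}
\item The window contributes at most $(4N_k+2)\|S'\|_\infty^2\delta_k^2$, which is small for $\delta_k^2\asymp 1/(kN_k)$.
\item The tail is handled by taking $N_k$ large and adding the event $B_{N_k}=\{|\xi_j|\le|j|/4\ \text{outside the window}\}$.
\end{itemize}
Your Step 2 adapts directly to this alternating pattern. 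Take $\epsilon_n$ alternately near $1$ and $-\varphi$; the Gaussian density bound gives positive probability, and recurrence follows from mixing of the shift $T^2$.

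\textbf{A secondary issue.} The a.s.\ bound $|\xi_n|\le C\sqrt{\log|n|}$ does not control the tail uniformly around a random and possibly very distant centre $c_k(\omega)$. The quantity $C\sqrt{\log|c_k|}$ need not be small relative to the window length. Building the tail condition into the recurring event, as the paper does, avoids this.
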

\begin{proof}
We work in the critical normalization \( B = [-1/2,1/2] \). The process \(\{\xi_n\}_{n\in\mathbb Z}\) is realized as a measurable map \( \boldsymbol{\xi} : (\Omega',\mathbb P) \to \mathbb{R}^{\mathbb{Z}} \) on an underlying probability space \((\Omega',\mathbb P)\), and we denote by \( \mu \coloneqq \mathbb{P}\circ\boldsymbol{\xi}^{-1} \) its law on \(\mathbb{R}^{\mathbb{Z}}\). All the events introduced below are cylinder sets in \(\mathbb{R}^{\mathbb{Z}}\), specified by finitely many coordinates; for such an event \(E\) we write \( \mathbb{P}(\boldsymbol{\xi}\in E)=\mu(E) \), and the finite-dimensional marginals of \(\mu\) are the corresponding Gaussian laws of the relevant coordinates of \(\boldsymbol{\xi}\).

The proof of Proposition 2.2 in \cite{bass_groechenig} provides a function \( S \in \mathrm{PW}_{[-1/2,1/2]} \) satisfying \(S(2j)=0\) for all \(j\in\mathbb Z\), with \( |S(x)| \le a / (1 + |x|^2) \) for some positive constant \(a\) and all \( x \in \mathbb{R} \). 
For each fixed \(k\), we introduce the two cylinder events 
\[ A_{\delta_k, N_k} \coloneqq \{ x \in \mathbb{R}^{\mathbb{Z}} : |x_{2j} | \le \delta_k, \ |x_{2j+1} - 1 | \le \delta_k, \ |j| \le N_k   \}  \] 
and 
\[ B_{N_k} \coloneqq \{ x \in \mathbb{R}^{\mathbb{Z}} : |x_j| \le |j| / 4, \ j \le -(2N_k+1) \text{ or } j \ge 2N_k+2 \}.  \]
Since \(S\) is bounded, so is \(S'\), by e.g.\ Theorem 11.1.2 in \cite{boas_entire}. We then choose \( \delta_k ^2 = \|S\|_2 ^2 / (8 N_k k \|S'\|_\infty ^2) = C_1 / (k N_k) \). Using the autoregressive structure of \(\xi_n\), the random vector \( (\xi_{-2N_k}, \dots ,\xi_{2 N_k + 1}) \) has precision matrix \cite{fikioris} \begin{equation} \label{eq:inv_cov_AR1} \boldsymbol{\Sigma}^{-1} _{4N_k + 2} = \frac{1}{\sigma_\epsilon^2} \begin{pmatrix} 1 & -\varphi & 0 & \dots & 0\\ -\varphi & 1+\varphi^2 & -\varphi & \dots & 0 \\ \vdots & \ddots & \ddots & \dots & \vdots \\ 0 & 0 & -\varphi & 1+\varphi^2  & -\varphi \\ 0 & 0 & 0 & -\varphi  & 1   \end{pmatrix}   \end{equation} and \( \det \left(\boldsymbol{\Sigma}^{-1} _{4N_k + 2}\right) = (1-\varphi^2)/\sigma_\epsilon^{2(4N_k+2)} \); equivalently, the marginal of \(\mu\) on the coordinates \(-2N_k,\dots,2N_k+1\) has density \[  f_{4N_k+2} (\mathbf{x})  = \frac{\sqrt{1-\varphi^2}}{(\sqrt{2 \pi}\sigma_\epsilon)^{4N_k+2}} \exp \left(-\mathbf{x}^t\boldsymbol{\Sigma}^{-1} _{4N_k + 2}\mathbf{x} / 2  \right).    \]
By using \eqref{eq:inv_cov_AR1} we can write explicitly the quadratic form in the exponential, which gives us \[ \begin{split} \sigma_\epsilon^2 \mathbf{x}^t\boldsymbol{\Sigma}^{-1} _{4N_k + 2}\mathbf{x} &= x_1 ^2 + x_{4N_k+2} ^2 + (1 + \varphi^2) \sum_{i=2}^{4N_k+1} x_i ^2   -2 \varphi \sum_{i=1}^{4 N_k + 1} x_{i} x_{i+1}  \\ & \le  \left( \delta_k ^2 + (1+\delta_k)^2 \right.   + (1+\varphi^2)( 2N_k(1+\delta_k)^2 + 2 N_k \delta_k ^2) \\& + \left. 2 \varphi (4N_k+1) \delta_k (1+\delta_k)  \right) = \ell   \end{split} \] if \( \mathbf{x} \in (0,1,\dots,0,1) + [-\delta_k, \delta_k]^{4N_k + 2} \). Therefore \[ \begin{split} \mu(A_{\delta_k, N_k}) & \ge C_2 (\varphi) \left( \frac{\sqrt{2}\delta_k}{\sqrt{\pi} \sigma_\epsilon } \right)^{4N_k+2} e^{- \frac{\ell}{2\sigma_\epsilon ^2}  } \\& = C_2 (\varphi) \left( \frac{C_3}{ \sqrt{N_k} } \right)^{4N_k+2} e^{- \frac{1}{2\sigma_\epsilon ^2} \left( O(N_k) + O\left(N_k^{1/2} \right) + O(1) + O \left(N_k^{-1/2} \right) + O\left(N_k^{-1} \right)     \right). }   \end{split} \] In logarithmic scale, the latter becomes 
\[ \log(\mu(A_{\delta_k, N_k})) \ge C_4 - C_5 N_k \log(N_k) - C_6 N_k  \] 
for large \(N_k\), where the constants \(C_i\) are strictly positive and depend only on \( \varphi \), \( \sigma_\epsilon \), and \(k\). On the other hand, by the union bound,
\[ \begin{split} \mu(B_{N_k}^c) & \le \sum_{j \le -(2N_k+1)} \mu(|x_j|>|j|/4) + \sum_{j \ge 2N_k+2} \mu(|x_j|>|j|/4) \\ &\lesssim \sum_{j \ge 2N_k+1} \exp(- C_7 j^2) \lesssim C_8 e^{-C_9 N_k^2} \end{split}\]
for some other strictly positive constants \( C_7, C_8, C_9 \). Combining the two,
\begin{equation} \label{eq:pos_prob} \mu(A_{\delta_k, N_k} \cap B_{N_k}) \ge \mu(A_{\delta_k, N_k}) - \mu(B_{N_k} ^c),  \end{equation} 
and by the asymptotics in \(N_k\) the right-hand side is eventually strictly positive.
By choosing \(N_k\) large enough we can also ensure that
\begin{equation} \label{eq:tail}\sum_{|j| \ge 2 N_k } \frac{a ^2}{(1 + (3|j|/4)^2)^2  } \le \frac{\|S\|_2 ^2}{4k}. \end{equation} 
We henceforth fix \( N_k \) satisfying both \eqref{eq:tail} and \eqref{eq:pos_prob}.
Consider now the shift operator \( T : \mathbb{R}^{\mathbb{Z}} \to \mathbb{R}^{\mathbb{Z}} \) defined by \( (T x)_n = x_{n+1} \). The stationary Gaussian AR(1) process with \(0<\varphi<1\) is mixing \cite{krishnapur}; equivalently, its law \( \mu = \mathbb{P} \circ \boldsymbol{\xi}^{-1} \) is mixing, hence ergodic, with respect to \(T\). It follows that the system \((\mathbb{R}^{\mathbb{Z}},\mu,U)\) with \(U\coloneqq T^2\) is also mixing, and thus ergodic. Set \( E_k \coloneqq A_{\delta_k, N_k} \cap B_{N_k} \subseteq \mathbb{R}^{\mathbb{Z}} \), so that \( \mu(E_k)>0 \) by \eqref{eq:pos_prob}, and define
\[
F_k \coloneqq \bigcup_{m \ge 1}U^{-m} E_k.
\]
By the characterization of ergodicity (cf. Theorem 1.5.(iii) in \cite{walters}) we have \( \mu(F_k)=1 \), and hence \( \mathbb{P}(\Omega_k) = 1 \) for \( \Omega_k \coloneqq \boldsymbol{\xi}^{-1}(F_k) \). Therefore, for every \(\omega \in \Omega_k\) there exists at least one \(\overline{m}\) with \( U^{\overline{m}}(\boldsymbol{\xi}(\omega)) \in E_k \), i.e.\ \( T^{2\overline{m}}(\boldsymbol{\xi}(\omega)) \in A_{\delta_k, N_k} \cap B_{N_k} \). Define \( s_k^\omega (x) \coloneqq S(x - 2\overline{m}) \). Since \( (T^{2\overline{m}}\boldsymbol{\xi}(\omega))_j = \xi_{j+2\overline{m}}(\omega) \), the membership \( T^{2\overline{m}}(\boldsymbol{\xi}(\omega))\in B_{N_k} \) means that for indices with \( j - 2\overline{m} \le -(2N_k+1) \) or \( j - 2\overline{m} \ge 2N_k+2 \) one has \( |\xi_j(\omega)| \le |j-2\overline{m}|/4 \), so \( |\theta_j(\omega) - 2\overline{m}| \ge 3|j-2\overline{m}|/4 \), and
\[ \sum_{\substack{ j - 2\overline{m} \le -(2N_k+1) \\ \text{or } j - 2\overline{m} \ge 2N_k+2 }} |s_k^\omega (\theta_j(\omega))|^2 \le \sum_{|\ell| \ge 2N_k+1} \frac{a^2}{(1 + (3|\ell|/4)^2)^2} \le \frac{\|S\|_2 ^2}{4k}, \]
where the last inequality follows from \eqref{eq:tail} since \(\{|\ell|\ge 2N_k+1\}\subset\{|\ell|\ge 2N_k\}\). Likewise, the membership \( T^{2\overline{m}}(\boldsymbol{\xi}(\omega))\in A_{\delta_k, N_k} \) means that for indices with \( j - 2\overline{m} \in \{-2N_k, \dots, 2N_k+1\} \) the point \( \theta_j(\omega)-2\overline{m} \) lies within distance \(\delta_k\) of an even integer, where \(S\) vanishes; hence
\[ \sum_{j - 2\overline{m} \in \{-2N_k, \dots, 2N_k+1\}} |s_k^\omega (\theta_j(\omega))|^2 \le (4N_k+2) \|S'\|_{\infty}^2 \delta_k^2 = \left(1 + \frac{1}{2 N_k}\right) \frac{\|S\|_2 ^2}{2 k}. \]
Combining the two estimates and using \( \|s_k^\omega\|_2=\|S\|_2 \),
\[ \sum_{j\in\mathbb Z} |s_k^\omega(\theta_j(\omega))|^2 \le \left( \frac{3}{4} + \frac{1}{4N_k} \right) \frac{\|S\|_2^2}{k} \le \frac{\|s_k^\omega\|_2^2}{k}, \]
the last bound holding for all \( N_k \ge 1 \). Taking \( \overline{\Omega} = \bigcap_{k \ge 1} \Omega_k \) yields the desired conclusion, since \( \mathbb{P}(\overline{\Omega})=1 \).
\end{proof}

\begin{remark}
The hypothesis on \( \xi_n \) in Theorem \ref{thm:ar_jitt_notstable} can be relaxed: the conclusion remains valid for any stationary mixing process for which, for every \(k\), there exist \(N_k\) and \(\delta_k\) such that
\[
\mathbb{P}(A_{\delta_k,N_k}) - \mathbb{P}(B_{N_k}^c) > 0,
\]
in the sense of \eqref{eq:pos_prob}.
\end{remark}
The previous proof shows that for each \( \omega \in \overline{\Omega} \) there exists a sequence of pathological bandlimited signals whose energy is concentrated precisely at the locations where the jitter \( \boldsymbol{\xi}(\omega) \) exhibits the bad pattern \( E_k \). This is an ad hoc deterministic construction, designed to disprove a worst-case property, and it crucially relies on aligning the signal with a specific shift \( 2\overline{m} \) of \( \boldsymbol{\xi} \). It would be of interest to investigate whether communication signals, which are often modeled as stationary bandlimited Gaussian processes, are stably sampled by the AR(1)-jittered lattice with high probability. This is beyond the scope of this note and it will be studied in future publications.


\section{Finite-dimensional analysis} \label{sec:fin_dim_analysis}

The negative result of Theorem \ref{thm:ar_jitt_notstable} concerns the full Paley-Wiener space and the infinite jittered lattice. In practice, however, one works with finitely many samples and a finite-dimensional approximation of the signal, and the question of stability must be reformulated accordingly. The goal of this section is to develop such a finite-dimensional counterpart and to identify regimes in which the associated sampling matrix is \emph{well-conditioned} with high probability.  

If \(s \in \mathrm{PW}_{[-1/2,1/2]} \), the Shannon-Whittaker-Nyquist-Kotelnikov theorem gives 
\[ s(t) = \sum_{n \in \mathbb{Z}}s(n)\sinc(t - n), \]
and in applications one typically resorts to the approximation 
\[ s(t) \approx \sum_{|n| \le N }s(n)\sinc(t - n) \]
for \(N\) sufficiently large. In this sense, \(s\) can be regarded as living in \( \mathrm{span}\{\sinc(\cdot - n)\}_{|n|\le N} \). The vector of jittered samples \( \widetilde{\mathbf{s}}_N \coloneqq (s(-N + \xi_{-N}),\dots ,s(N+\xi_{N}))^t \) is related to the non-jittered one \( \mathbf{s}_N \coloneqq (s(-N ),\dots ,s(N))^t \) through the random \( (2N+1) \times (2 N+1) \) sampling matrix
\begin{equation} \label{eq:sens_matr} \mathbf{A}^{(N)} \coloneqq \begin{pmatrix} \sinc(\xi_{-N}) & \sinc(-1 + \xi_{-N}) & \dots & \sinc(-2N + \xi_{-N}) \\ \sinc(1 + \xi_{-N+1}) & \sinc(\xi_{-N+1}) & \dots & \sinc(-2N+1 + \xi_{-N+1}) \\ \vdots & \vdots & \ddots & \vdots \\ \sinc(2N + \xi_N) & \sinc(2N - 1 + \xi_N) & \dots & \sinc( \xi_N)\end{pmatrix},  \end{equation} 
so that \( \widetilde{\mathbf{s}}_N = \mathbf{A}^{(N)} \mathbf{s}_N \). Then 
\[  \sigma_{\min}(\mathbf{A}^{(N)}) \|\mathbf{s}_N\|_{\ell^2} = \sigma_{\min}(\mathbf{A}^{(N)}) \|s\|_{L^2} \le \|\mathbf{A}^{(N)} \mathbf{s}_N\|_{\ell^2} = \|\widetilde{\mathbf{s}}_N\|_{\ell^2},    \]
where the first identity follows from Parseval. Rearranging yields \( \|s\|_{L^2} \le \sigma_{\min}(\mathbf{A}^{(N)})^{-1} \|\widetilde{\mathbf{s}}_N\|_{\ell^2} \) (cf.\ Definition \ref{eq:stable_sampling}), so the stability of finite-dimensional reconstruction reduces to the conditioning of \(\mathbf{A}^{(N)}\). Note that, in general, \( \mathbf{A}^{(N)} \) is neither symmetric nor Toeplitz. Observe also that \( \boldsymbol{\xi}_N \mapsto \text{det}(\mathbf{A}^{(N)} (\boldsymbol{\xi}_N))  \) is analytic and not identically zero, and thus \( \mathbb{P}(\{ \text{det}(\mathbf{A}^{(N)}) = 0 \})=0 \) for all \( N \), making the mere invertibility a trivial matter. We prove the following. 
\begin{proposition} \label{prop:prob_lb_sigmamin}
Let \( \mathbf{A}^{(N)} \) be the sensing matrix defined in \eqref{eq:sens_matr} with \( \boldsymbol{\xi}_N = (\xi_{-N}, \dots , \xi_N)\) a random Gaussian vector with any correlation structure. Then \( \sigma_{\min} ( \mathbf{A}^{(N)}) \ge \epsilon > 0 \) with probability at least \( 1-\eta \) if \begin{equation} \label{eq:cond_on_N} N \le \frac{1}{2} \left[ \frac{\log(1 - \eta)}{\log \left(1 - \sqrt{\frac{8}{\pi}} \, \dfrac{e^{-c^2/(2 \sigma_\xi ^2)}}{\frac{c}{\sigma_\xi} + \sqrt{\frac{c^2}{\sigma_\xi^2} + 2}} \right)} - 1   \right] \end{equation} where \( c \coloneqq \log(2 - \epsilon)/\pi \).
\end{proposition}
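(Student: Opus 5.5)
The plan is to split the statement into a deterministic part and a probabilistic part:
\begin{equation*}
\max_{|n|\le N}|\xi_n|\le c \;\Longrightarrow\; \sigma_{\min}(\mathbf{A}^{(N)})\ge\epsilon,
\end{equation*}
and then a lower bound on the probability of the event \(\{\max_{|n|\le N}|\xi_n|\le c\}\) that holds whatever the correlation structure of \(\boldsymbol{\xi}_N\).

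\textbf{Deterministic step.} This is a Paley--Wiener type perturbation argument, in the spirit of Theorem \ref{thm:Kadec}.
\begin{enumerate}
\item Write \(\mathbf{A}^{(N)}=\mathbf{I}+\mathbf{E}\). The entry in row \(m\) and column \(n\) of \(\mathbf{A}^{(N)}\) is \(\sinc(m-n+\xi_m)\), and
\begin{equation*}
\sinc(m-n+\xi_m)=\int_{-1/2}^{1/2} e^{2\pi i\omega(m-n)}\,e^{2\pi i\omega\xi_m}\,d\omega .
\end{equation*}
\item Expand \(e^{2\pi i\omega\xi_m}-1=\sum_{k\ge1}(2\pi i\,\xi_m\,\omega)^k/k!\). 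This gives
\begin{equation*}
\mathbf{E}=\sum_{k\ge1}\frac{(2\pi i)^k}{k!}\,\mathbf{D}^k\mathbf{M}_k,
\end{equation*}
where \(\mathbf{D}=\mathrm{diag}(\xi_{-N},\dots,\xi_N)\) and \((\mathbf{M}_k)_{mn}=\int_{-1/2}^{1/2}\omega^k e^{2\pi i\omega(m-n)}\,d\omega\).
\item Bound \(\|\mathbf{M}_k\|_{\text{op}}\le 2^{-k}\). The matrix \(\mathbf{M}_k\) is a compression of the multiplication operator by \(\omega^k\) on \(L^2[-1/2,1/2]\), expressed in the orthonormal exponential basis. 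So Parseval gives this bound directly.
\item On the event \(\max_n|\xi_n|\le c\) we have \(\|\mathbf{D}\|_{\text{op}}\le c\). Summing the series,
\begin{equation*}
\|\mathbf{E}\|_{\text{op}}\le\sum_{k\ge1}\frac{(\pi c)^k}{k!}=e^{\pi c}-1=1-\epsilon,
\end{equation*}
since \(c=\log(2-\epsilon)/\pi\).
\item Weyl's inequality for singular values then gives \(\sigma_{\min}(\mathbf{A}^{(N)})\ge 1-\|\mathbf{E}\|_{\text{op}}\ge\epsilon\).
\end{enumerate}
This step explains the particular form of \(c\).

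\textbf{Probabilistic step.} Arbitrary correlation is the main obstacle, and I would handle it with Šidák's inequality. For any centered Gaussian vector,
\begin{equation*}
\mathbb P\Bigl(\bigcap_{|n|\le N}\{|\xi_n|\le c\}\Bigr)\ge\prod_{|n|\le N}\mathbb P(|\xi_n|\le c)=(1-p)^{2N+1},
\end{equation*}
where \(p=\mathbb P(|\xi_n|>c)\) is the same for every \(n\) because each \(\xi_n\sim\mathcal N(0,\sigma_\xi^2)\). This is exactly where "any correlation structure" is absorbed. I expect checking that Šidák applies as stated (symmetric slabs, common variance) to be the main point to verify.

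Next, bound \(p\) from above using the Komatsu-type Mills ratio bound \(Q(x)\le\sqrt{2/\pi}\,e^{-x^2/2}/(x+\sqrt{x^2+2})\) with \(x=c/\sigma_\xi\). This gives
\begin{equation*}
p\le\sqrt{\tfrac{8}{\pi}}\,\frac{e^{-c^2/(2\sigma_\xi^2)}}{\frac{c}{\sigma_\xi}+\sqrt{\frac{c^2}{\sigma_\xi^2}+2}}=:\bar p .
\end{equation*}

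\textbf{Conclusion.} It is enough that \((1-\bar p)^{2N+1}\ge1-\eta\). Take logarithms and divide by \(\log(1-\bar p)<0\), which flips the inequality to \(2N+1\le\log(1-\eta)/\log(1-\bar p)\). This is exactly condition \eqref{eq:cond_on_N}.
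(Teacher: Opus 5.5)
Your proposal is correct and follows essentially the same route as the paper. Both use Weyl's inequality together with the expansion $\mathbf{A}^{(N)}-\mathbf{I}_N=\sum_{k\ge1}\frac{1}{k!}\mathbf{D}_{\boldsymbol{\xi}_N^k}\,\sinc^{(k)}\circ\mathbf{T}^{(N)}$ (your $(2\pi i)^k\mathbf{M}_k$ is exactly $\sinc^{(k)}\circ\mathbf{T}^{(N)}$), bound the coefficient matrices by $\pi^k$, and then apply \v{S}id\'ak and Komatsu to arrive at \eqref{eq:cond_on_N}. The only cosmetic difference is that you justify the operator-norm bound by viewing $\mathbf{M}_k$ as a compression of multiplication by $\omega^k$ on $L^2[-1/2,1/2]$, whereas the paper cites the same fact as a Toeplitz-symbol bound from B\"ottcher--Grudsky.
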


\begin{proof}
By Weyl's inequality for singular values, 
\begin{equation} \label{eq:sv_ineq} \begin{split} \sigma_{\min} (\mathbf{A}^{(N)}) & \ge \sigma_{\min} (\mathbf{I}_N) - \|\mathbf{A}^{(N)} - \mathbf{I}_N\|_{\text{op}} \\ & \ge 1 - \|\mathbf{A}^{(N)} - \mathbf{I}_N\|_{\text{op}}, \end{split} \end{equation} 
where \( \mathbf{I}_N \) denotes the \( (2N+1) \times (2N + 1) \) identity matrix. We now study \( \|\mathbf{A}^{(N)} - \mathbf{I}_N\|_{\text{op}} \). Introduce the auxiliary matrices 
\[ \mathbf{T}^{(N)} \coloneqq \begin{pmatrix} 0 & -1 & - 2 &  \dots & -2N \\ 1 & 0 & -1 & \dots & -2N+1 \\ \vdots & \vdots & \vdots & \ddots & \vdots  \\ 2N & 2N-1 & 2N-2 & \dots & 0  \end{pmatrix}, \quad \mathbf{D}_{\boldsymbol{\xi}_N ^r} \coloneqq \mathrm{diag}(\xi_{-N}^r, \dots , \xi_N ^r),  \] 
and observe that 
\[ \mathbf{A}^{(N)} - \mathbf{I}_N = \sum_{r \ge 1} \frac{1}{r!}\mathbf{D}_{\boldsymbol{\xi}_N ^r} {\sinc ^{(r)}} \circ \mathbf{T}^{(N)}, \] 
where \(\sinc ^{(r)} \) is the \(r\)-th derivative of the sinc function and \(\circ\) denotes componentwise application of \(\sinc ^{(r)} \) to the entries of \( \mathbf{T}^{(N)} \). Setting \( \delta_N \coloneqq \max_{|n| \le N} |\xi_n| \), we obtain
\[\|\mathbf{A}^{(N)} - \mathbf{I}_N\|_{\text{op}} \le \sum_{r \ge 1} \frac{\delta_N ^r \|{\sinc ^{(r)}} \circ \mathbf{T}^{(N)}\|_{\text{op}}}{r!} \le \sum_{r \ge 1} \frac{\delta_N ^r \pi^r}{r!} = e^{\delta_N \pi} - 1, \] 
where the last inequality uses the fact that the (infinite) Toeplitz matrix \( {\sinc ^{(r)}} \circ \mathbf{T}^{(\infty)} \) has Fourier symbol \( (2 \pi i \omega)^r \) on \( [-1/2,1/2] \), and that
\[ \|{\sinc^{(r)}} \circ \mathbf{T}^{(N)}\|_{\text{op}} \le \|{\sinc ^{(r)}} \circ \mathbf{T}^{(\infty)}\|_{\text{op}} = \max_{\omega \in [-1/2,1/2] } |2 \pi \omega|^r = \pi ^r; \]
see Theorem 1.1 in \cite{boettcher_grudsky}. Substituting back into \eqref{eq:sv_ineq} gives \( \sigma_{\min} (\mathbf{A}^{(N)}) \ge 2 - e^{\delta_N \pi}  \). It remains to determine when \( 2 - e^{\delta_N \pi} \ge \epsilon > 0 \) with high probability, which is equivalent to requiring
\begin{equation} \label{eq:prob_lb} \mathbb{P}[ |\xi_{-N}| \le c, \ |\xi_{-N+1}| \le c, \ \dots \ , |\xi_N| \le c] \ge 1 - \eta \end{equation} 
for \(\eta\) small.

By the classical Šidák theorem\footnote{It is unclear to the authors whether the problem of finding a non-trivial non-asymptotic lower bound for the left-hand side of \eqref{eq:prob_lb} has been satisfactorily solved in the literature when the \( \xi_n \) are Gaussian and correlated. Without the absolute values, the ``excess'' probability caused by the correlations was estimated by Li and Shao in \cite{LiShao}, but they themselves remark that ``[...] it may be possible to have a comparisons of the two sided probabilities [with absolute values], [...] however, at this time, we are unable to find a right formulation [...]''.} \cite{sidak} we obtain
\[ \mathbb{P}[ |\xi_{-N}| \le c, \ |\xi_{-N+1}| \le c, \ \dots \ , |\xi_N| \le c] \ge \prod_{n=-N}^N \mathbb{P}[|\xi_n| \le c] = \left[ \mathrm{erf} \left( \frac{c}{\sigma_\xi \sqrt{2}} \right) \right]^{2N+1},   \] 
and Komatsu's inequality \cite{Dumbgen} finally yields
\begin{equation} \label{eq:after_komatsu} \left[ \mathrm{erf} \left( \frac{c}{\sigma_\xi \sqrt{2}} \right) \right]^{2N+1} \ge \left[1 - \sqrt{\frac{8}{\pi}} \, \frac{e^{-c^2/(2 \sigma_\xi ^2)}}{\frac{c}{\sigma_\xi} + \sqrt{\frac{c^2}{\sigma_\xi^2} + 2}}  \right]^{2N+1}.  \end{equation} 
For fixed \( c \) and \( \sigma_\xi \), the right-hand side of \eqref{eq:after_komatsu} tends to \(0\) as \(N \to \infty \), so the lower bound is asymptotically useless. In the finite-dimensional setting, however, it allows us to estimate how large the sensing matrix \( \mathbf{A}^{(N)} \) may be ``allowed to be'' in order to still have \( \sigma_{\min} (\mathbf{A}^{(N)}) \ge \epsilon \) with probability at least \(1 - \eta \). A direct calculation from \eqref{eq:after_komatsu} gives
\[ N \le \frac{1}{2} \left[ \frac{\log(1 - \eta)}{\log \left(1 - \sqrt{\frac{8}{\pi}} \, \dfrac{e^{-c^2/(2 \sigma_\xi ^2)}}{\frac{c}{\sigma_\xi} + \sqrt{\frac{c^2}{\sigma_\xi^2} + 2}}  \right)} - 1   \right]. \]
\end{proof}

\subsection{First-order Taylor expansion analysis} \label{sec:first_ord_analysis}
Proposition \ref{prop:prob_lb_sigmamin} is based on a uniform tail bound on \( \max_{|n| \le N} |\xi_n| \) and is consequently loose: it captures the worst single coordinate of \( \boldsymbol{\xi} \) rather than the collective effect on the operator norm. In the small-jitter regime \( \sigma_\xi \ll 1 \), typical of hardware applications, a sharper analysis is possible via first-order linearization. The sensing matrix \( \mathbf{A}^{(N)} \) may then be replaced with the first-order Taylor surrogate \begin{equation} \label{eq:ford_surr} \mathbf{B}^{(N)} \coloneqq \mathbf{I}_N + \mathbf{D}_{\boldsymbol{\xi}_N^1} \mathbf{S}_N, \end{equation} where \( (\mathbf{S}_N)_{mn} = (-1)^{m-n}/(m-n) \) for \( m \ne n \) and \( (\mathbf{S}_N)_{mm} = 0 \), obtained by evaluating the first derivative of \( \sinc \) at integer arguments. In signal-processing practice, \( \mathbf{B}^{(N)} \) often entirely replaces \( \mathbf{A}^{(N)} \).

\begin{proposition} \label{prop:linearized_prob_lb}
Let \( \mathbf{B}^{(N)} \) be the first-order Taylor surrogate defined in \eqref{eq:ford_surr}, with \( \boldsymbol{\xi}_N = (\xi_{-N}, \dots , \xi_N) \) the two-sided Gaussian AR(1) process \eqref{eq:two-sidedAR1}. Then \( \sigma_{\min} ( \mathbf{B}^{(N)}) \ge \epsilon > 0 \) with probability of at least \( 1 - \eta \) if
\begin{equation} \label{eq:lin_uppbonN}
    N \le \frac{1}{4} \left[ \eta \exp\!\left((1-  \epsilon)^2 / (2 \sigma_\xi ^2 \pi^2)\right) - 2   \right],
\end{equation}
which should be compared to condition \eqref{eq:cond_on_N}.
\end{proposition}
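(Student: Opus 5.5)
The plan is to mirror the proof of Proposition \ref{prop:prob_lb_sigmamin}, replacing the exponential series with a single linear term. Then, instead of Šidák's inequality and Komatsu's bound, I will use a plain union bound with a Chernoff-type Gaussian tail. The shape of \eqref{eq:lin_uppbonN} points to exactly this: rearranged, it reads
\[ 2(2N+1)\, e^{-(1-\epsilon)^2/(2\sigma_\xi^2\pi^2)} \le \eta , \]
which is $(2N+1)$ copies of the tail bound $\mathbb{P}[|\xi_n|>t]\le 2e^{-t^2/(2\sigma_\xi^2)}$ evaluated at $t=(1-\epsilon)/\pi$.

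\emph{Step 1 (deterministic reduction).} By Weyl's inequality, exactly as in \eqref{eq:sv_ineq},
\[ \sigma_{\min}(\mathbf{B}^{(N)}) \ge 1 - \|\mathbf{D}_{\boldsymbol{\xi}_N^1}\mathbf{S}_N\|_{\text{op}} \ge 1 - \delta_N \|\mathbf{S}_N\|_{\text{op}}, \]
where $\delta_N=\max_{|n|\le N}|\xi_n|$. The matrix $\mathbf{S}_N$ is the finite section of the Toeplitz matrix $\sinc'\circ\mathbf{T}^{(\infty)}$; the sign convention for $\mathbf{S}_N$ is immaterial, since only its operator norm enters. As already used in the proof of Proposition \ref{prop:prob_lb_sigmamin} (the case $r=1$, via Theorem 1.1 in \cite{boettcher_grudsky}), this infinite matrix has symbol $2\pi i\omega$ on $[-1/2,1/2]$. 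Hence $\|\mathbf{S}_N\|_{\text{op}}\le\pi$, and so $\sigma_{\min}(\mathbf{B}^{(N)})\ge 1-\pi\delta_N$. It therefore suffices that $\delta_N\le (1-\epsilon)/\pi$.

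\emph{Step 2 (probabilistic bound).} Each $\xi_n\sim\mathcal N(0,\sigma_\xi^2)$ with $\sigma_\xi^2=\sigma_\epsilon^2/(1-\varphi^2)$, by the stationarity noted in Section \ref{sec:intro}. The Chernoff bound gives $\mathbb{P}[|\xi_n|>t]\le 2e^{-t^2/(2\sigma_\xi^2)}$. The union bound over the $2N+1$ indices, which needs no information about the AR(1) correlations, then yields
\[ \mathbb{P}\!\left[\delta_N>\tfrac{1-\epsilon}{\pi}\right]\le 2(2N+1)\exp\!\left(-\frac{(1-\epsilon)^2}{2\sigma_\xi^2\pi^2}\right). \]
Requiring the right-hand side to be at most $\eta$ and solving for $N$ gives \eqref{eq:lin_uppbonN} directly.

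\emph{Expected obstacle.} There is no genuine obstacle. The only point needing care is the justification of $\|\mathbf{S}_N\|_{\text{op}}\le\pi$ through the finite-section/symbol argument, which is already available from the proof of Proposition \ref{prop:prob_lb_sigmamin}. The final comparison with \eqref{eq:cond_on_N} is a remark rather than part of the proof. The gain comes from the linear dependence $1-\pi\delta_N$ replacing $2-e^{\pi\delta_N}$, and from the linear-in-$N$ union bound replacing the product of $2N+1$ probabilities. As a result, $N$ is allowed to grow like $\eta\, e^{(1-\epsilon)^2/(2\sigma_\xi^2\pi^2)}$.
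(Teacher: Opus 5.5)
Your proof is correct and gives exactly the bound the paper obtains, $\mathbb{P}[\sigma_{\min}(\mathbf{B}^{(N)})<\epsilon]\le 2(2N+1)\exp\bigl(-(1-\epsilon)^2/(2\sigma_\xi^2\pi^2)\bigr)$. Your route, however, is different from the paper's.

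The paper never passes through $\delta_N$. It whitens the jitter with the Cholesky factor $\mathbf{L}_N$ and writes $\mathbf{D}_{\boldsymbol{\xi}_N^1}\mathbf{S}_N=\sum_n g_n\,\mathrm{diag}((\mathbf{L}_N)_{:n})\,\mathbf{S}_N$ as a matrix Gaussian series. It then bounds the matrix variance statistic by $\sigma_\xi^2\pi^2$. One term uses $\sum_n(\mathbf{L}_N)_{mn}^2=\sigma_\xi^2$; the other uses a Schur-product norm inequality for the positive semidefinite matrix $\mathbf{L}_N\mathbf{L}_N^t$. Finally it applies Tropp's Theorem 4.1.1, whose dimensional factor $d_1+d_2=2(2N+1)$ plays the role of your union-bound count.

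You replace all of this with $\|\mathbf{D}_{\boldsymbol{\xi}_N^1}\mathbf{S}_N\|_{\text{op}}\le\delta_N\|\mathbf{S}_N\|_{\text{op}}$ and a scalar union bound. This is shorter and uses only the marginal laws, so it visibly holds for any correlation structure. It also makes clear that, with the variance statistic bounded this crudely, the matrix-concentration step captures nothing beyond $\max_n|\xi_n|$. This somewhat undercuts the paper's framing of the linearized analysis as capturing collective effects. What the paper's framework buys is room for refinement: a sharper estimate of the variance statistic, such as the commutator analysis sketched after the proof, could exploit the AR(1) correlations. A bound routed through $\delta_N$ can never do that.

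One correction to your closing comment: the union bound is not a source of gain over \eqref{eq:cond_on_N}.
\begin{itemize}
\item By Bernoulli's inequality, $(1-p)^{2N+1}\ge 1-(2N+1)p$, so \v{S}id\'ak's product bound is at least as strong as the union bound.
\item Komatsu's tail $\sqrt{8/\pi}\,e^{-x^2/2}/(x+\sqrt{x^2+2})$ is at most $(2/\sqrt{\pi})\,e^{-x^2/2}<2e^{-x^2/2}$, so it beats the Chernoff bound.
\end{itemize}
The entire improvement therefore comes from the larger admissible threshold $(1-\epsilon)/\pi>\log(2-\epsilon)/\pi$, which linearization affords. In fact, combining your Step 1 with \v{S}id\'ak and Komatsu at that threshold yields a condition on $N$ that is strictly more permissive than \eqref{eq:lin_uppbonN}.
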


\begin{proof}
The matrix \( \mathbf{S}_N \) is Toeplitz and skew-symmetric with symbol \( 2 \pi i \omega \) (cf.\ the proof of Proposition \ref{prop:prob_lb_sigmamin}), and therefore \(\|\mathbf{S}_N \|_{\text{op}} \le \pi \). What we would like, however, is a direct bound on \( \| \mathbf{D}_{\boldsymbol{\xi}_N^1} \mathbf{S}_N\|_{\text{op}}  \). We can write
\[ \mathbf{D}_{\boldsymbol{\xi}_N^1} \mathbf{S}_N = \sum_{ |n| \le N } g_n \, \mathrm{diag}((\mathbf{L} _N)_{:n}) \, \mathbf{S}_N \eqqcolon \sum_{ |n| \le N } g_n \widetilde{\mathbf{S}}_{N, n}, \]
where \((\mathbf{L} _N)_{:n}\) is the \(n\)-th column of the Cholesky decorrelation matrix \(\mathbf{L} _N\), \(\mathbf{L} _N \mathbf{L} _N^t = \boldsymbol{\Sigma}_N\), and \( (g_{-N}, \dots, g_N) \eqqcolon \mathbf{g}_N \sim \mathcal{N}(\mathbf{0}, \mathbf{I}_N) \). The matrix \( \mathbf{L} _N \) may be obtained by ``unrolling'' the AR(1) recurrence:
\begin{equation} \label{eq:cholL} \mathbf{L} _N = \begin{pmatrix}
\sigma_\xi & 0 & 0 & \cdots & 0 \\
\sigma_\xi \varphi & \sigma_\xi \sqrt{1 - \varphi^2} & 0 & \cdots & 0 \\
\sigma_\xi \varphi^2 & \sigma_\xi \varphi \sqrt{1 - \varphi^2}  & \sigma_\xi \sqrt{1 - \varphi^2} & \cdots & 0 \\
\vdots & \vdots & \vdots & \ddots & \vdots \\
\sigma_\xi \varphi^{2N} & \sigma_\xi \varphi^{2N-1} \sqrt{1 - \varphi^2} & \sigma_\xi \varphi^{2N-2} \sqrt{1 - \varphi^2} & \cdots & \sigma_\xi \sqrt{1 - \varphi^2}
\end{pmatrix}. \end{equation} 
To apply Theorem 4.1.1 in \cite{Tropp}, which addresses random series with matrix coefficients, we need to estimate
\[
v(\mathbf{D}_{\boldsymbol{\xi}_N^1} \mathbf{S}_N) \coloneqq \max \Bigl\{ \bigl\|\textstyle\sum_n \widetilde{\mathbf{S}}_{N, n} \widetilde{\mathbf{S}}_{N, n} ^t \bigr\|_{\text{op}}, \, \bigl\|\textstyle\sum_n \widetilde{\mathbf{S}}_{N, n} ^t \widetilde{\mathbf{S}}_{N, n}  \bigr\|_{\text{op}} \Bigr\}.
\]
For the first term, we have \( \sum_n \widetilde{\mathbf{S}}_{N, n} ^t \widetilde{\mathbf{S}}_{N, n} = -\sigma_\xi ^2 \mathbf{S}_N ^2 \), because \( \sum_{n} (\mathbf{L}_N)_{mn} ^2 = \sigma_\xi ^2 \) for all \(m\) (cf.\ \eqref{eq:cholL}); hence \( \|\sum_n \widetilde{\mathbf{S}}_{N, n} ^t \widetilde{\mathbf{S}}_{N, n}\|_{\text{op}} \le \sigma_{\xi} ^2 \pi^2 \). For the second term, we observe that
\[ \begin{split} \sum_{|n| \le N} \widetilde{\mathbf{S}}_{N, n} \widetilde{\mathbf{S}}_{N, n} ^t & = - \sum_{|n| \le N} \mathrm{diag}((\mathbf{L}_N)_{:n}) \mathbf{S}_N^2 \mathrm{diag}((\mathbf{L}_N)_{:n}) \\ & = - \underbrace{\left[\sum_{|n| \le N} (\mathbf{L}_N)_{:n} (\mathbf{L}_N)_{:n} ^t \right]}_{\eqqcolon \widetilde{\mathbf{L}}_N} \odot \mathbf{S}_N ^2. \end{split}  \]
By construction \( (\widetilde{\mathbf{L}}_N)_{nn} = \sum_{j \le n} (\mathbf{L}_N)_{nj} ^2 = \sigma_\xi ^2 \), and the matrix \( \widetilde{\mathbf{L}}_N \) is positive semidefinite, being a sum of rank-one positive semidefinite matrices. Therefore
\[ \| \widetilde{\mathbf{L}}_N \odot \mathbf{S}_N ^2\|_{\text{op}} \le \max_n (\widetilde{\mathbf{L}}_N)_{nn} \, \|\mathbf{S}_N ^2 \|_{\text{op}} \le \sigma_\xi ^2 \pi^2, \] 
where the first inequality is Theorem 5.5.18 in \cite{horn_johnson}. We conclude that \( v(\mathbf{D}_{\boldsymbol{\xi}_N^1} \mathbf{S}_N) \le \sigma_\xi ^2 \pi^2 \). Formula (4.1.6) in Theorem 4.1.1 of \cite{Tropp}, together with the previous calculations, gives 
\begin{equation} \label{eq:linear_probbound} \mathbb{P} \left[  \| \mathbf{D}_{\boldsymbol{\xi}_N^1} \mathbf{S}_N\|_{\text{op}} \le 1-  \epsilon \right] \ge 1 - 2(2N+1) \exp\!\left(-(1-  \epsilon)^2 / (2 \sigma_\xi ^2 \pi ^2)\right), \end{equation} 
and since \( \sigma_{\min}(\mathbf{B}^{(N)}) \ge 1 - \|\mathbf{D}_{\boldsymbol{\xi}_N^1} \mathbf{S}_N\|_{\text{op}} \) by Weyl's inequality, the right-hand side of \eqref{eq:linear_probbound} is \( \ge 1 - \eta \) provided that 
\[
    N \le \frac{1}{4} \left[ \eta \exp\!\left((1-  \epsilon)^2 / (2 \sigma_\xi ^2 \pi^2)\right) - 2   \right].
\]
\end{proof}

\begin{remark}
The same techniques used in the proofs of Propositions \ref{prop:prob_lb_sigmamin} and \ref{prop:linearized_prob_lb} yield probabilistic upper bounds for \( \sigma_{\max}(\mathbf{A}^{(N)}) \) and \( \sigma_{\max}(\mathbf{B}^{(N)}) \), namely \( \sigma_{\max}(\mathbf{A}^{(N)})  \le e^{\delta_N \pi} \) and \( \sigma_{\max}(\mathbf{B}^{(N)}) \le 1 + \|\mathbf{D}_{\boldsymbol{\xi}_N^1} \mathbf{S}_N\|_{\text{op}} \). Consequently, on the events considered in Propositions \ref{prop:prob_lb_sigmamin} and \ref{prop:linearized_prob_lb} respectively,
\[ 1 \le \kappa(\mathbf{A}^{(N)}) = \frac{\sigma_{\max}(\mathbf{A}^{(N)})}{\sigma_{\min}(\mathbf{A}^{(N)})} \le \frac{e^{\delta_N \pi}}{2- e^{\delta_N \pi}}, \qquad 1 \le \kappa(\mathbf{B}^{(N)}) \le \frac{2 - \epsilon}{\epsilon}, \] 
where the lower bounds are immediate from \( \sigma_{\min} \le \sigma_{\max} \). These provide information not only on the invertibility of \( \mathbf{A}^{(N)} \) and \( \mathbf{B}^{(N)} \), but also on their condition numbers, and hence on their numerical stability. In particular, the regions depicted in Figure \ref{fig:sidak_prob} are also regions of high-probability control on \( \kappa(\mathbf{A}^{(N)}) \) and \( \kappa(\mathbf{B}^{(N)}) \).
\end{remark}

\begin{figure}[h]
\begin{subfigure}{.5\textwidth}
  \centering
  \includegraphics[width=1\linewidth]{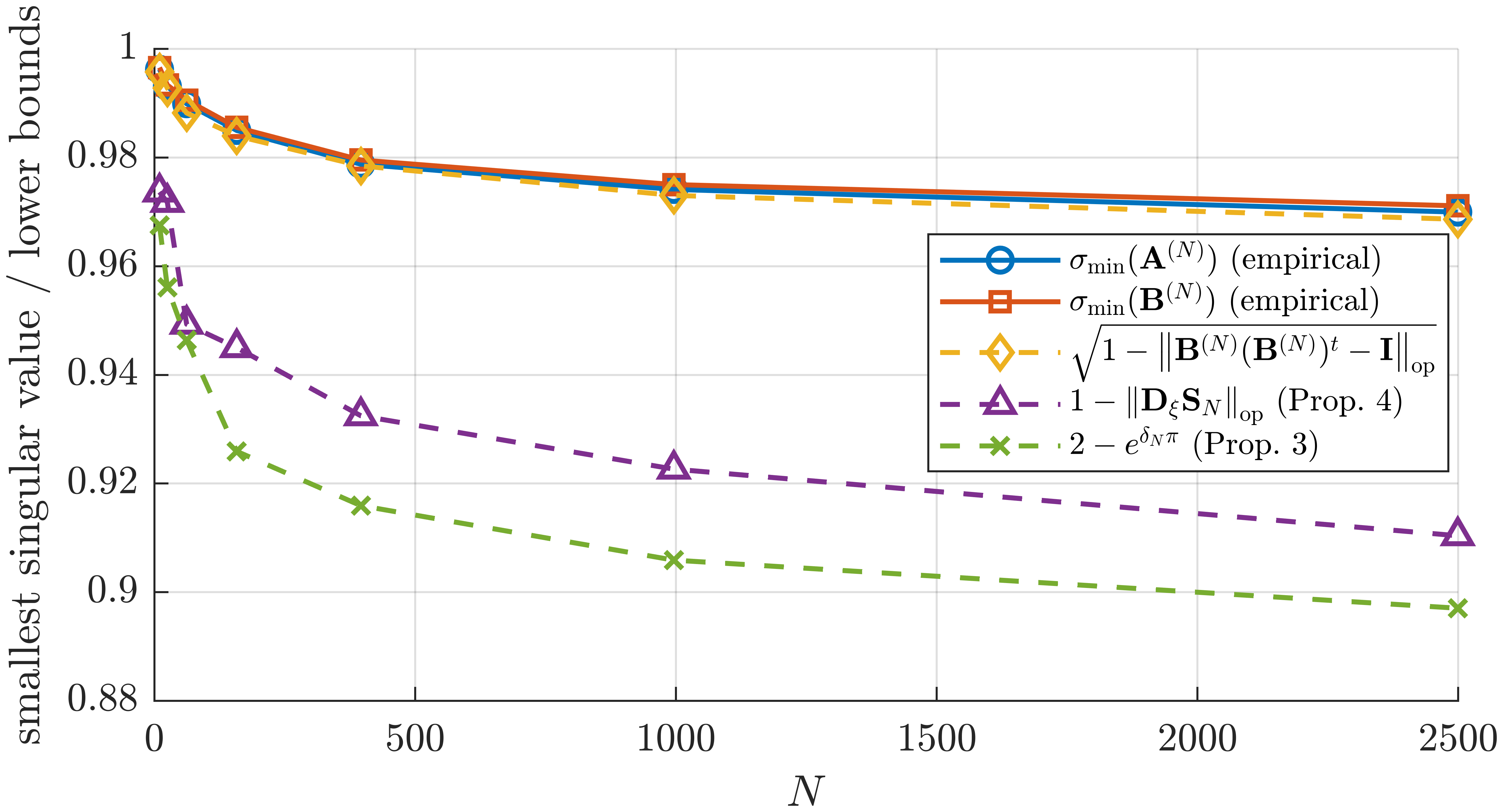}
  \caption{\( \sigma_\xi = 0.01 \)}
\end{subfigure}%
\begin{subfigure}{.5\textwidth}
  \centering
  \includegraphics[width=1\linewidth]{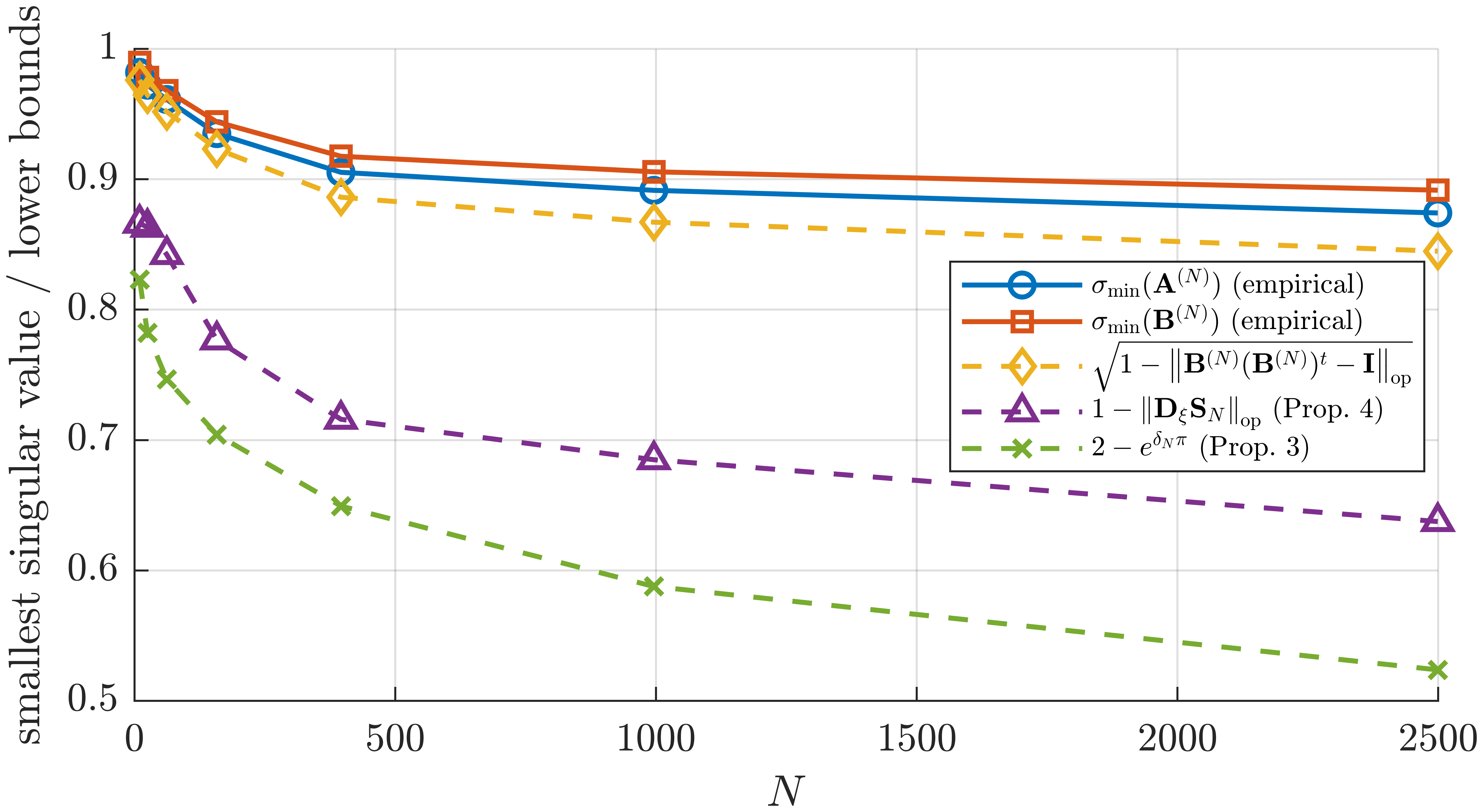}
  \caption{\( \sigma_\xi = 0.04 \)}
\end{subfigure}
\caption{The figures above show that the obtained bounds are loose.}
\label{fig:sigma_min_est}
\end{figure}

The figures \ref{fig:sigma_min_est} show that the bounds \( \sigma_{\min} (\mathbf{A}^{(N)}) \ge 2 - e^{\delta_N \pi} \) and \( \sigma_{\min} (\mathbf{B}^{(N)}) \ge 1 - \| \mathbf{D}_{\boldsymbol{\xi}^1 _N} \mathbf{S}_N \|_\text{op}  \) are rather loose. The same figures numerically suggest that a sharper lower bound for \( \sigma_{\min} (\mathbf{B}^{(N)}) \) is \( \sqrt{1 - \| \mathbf{I}_N  - \mathbf{B}^{(N)} (\mathbf{B}^{(N)})^t\|_\text{op} } \).

An analytic justification proceeds by exploiting the skew-symmetry \( \mathbf{S}_N^t = - \mathbf{S}_N \), which gives
\[
\mathbf{B}^{(N)} (\mathbf{B}^{(N)})^t - \mathbf{I}_N = [\mathbf{D}_{\boldsymbol{\xi}^1_N}, \mathbf{S}_N] - \mathbf{D}_{\boldsymbol{\xi}^1_N} \mathbf{S}_N^2 \mathbf{D}_{\boldsymbol{\xi}^1_N},
\]
where \( [\cdot,\cdot] \) denotes the commutator. The commutator is again a Gaussian series in the variables \( \{g_n\} \) of Section \ref{sec:first_ord_analysis}, with coefficient matrices \( \mathbf{C}_n \coloneqq \mathbf{S}_N \, \mathrm{diag}((\mathbf{L}_N)_{:n}) - \mathrm{diag}((\mathbf{L}_N)_{:n})\, \mathbf{S}_N \), and is therefore amenable to the same Theorem 4.1.1 of \cite{Tropp} that we used in the proof of Proposition \ref{prop:linearized_prob_lb}. The entries of \( \mathbf{C}_n \) involve differences of entries of \( \mathbf{L}_N \), which encode the AR(1) coherence between adjacent jitter samples and are small when \( \varphi \approx 1 \); these cancellations heuristically explain the empirically observed sharpness of the Gram-matrix bound in the highly correlated regime relevant to the ADC application. Translating them into an operator-norm bound with an explicit \( \varphi \)-dependent improvement over Proposition \ref{prop:linearized_prob_lb}, however, appears to require non-standard Schur-multiplier estimates that lie outside the scope of this note.

The figures below show a qualitative illustration of \eqref{eq:cond_on_N} and \eqref{eq:lin_uppbonN} for different values of \( \sigma_\xi\) and \( \epsilon \). 
\begin{figure}[h]
\begin{subfigure}{.26\textwidth}
  \centering
  \includegraphics[width=1\linewidth]{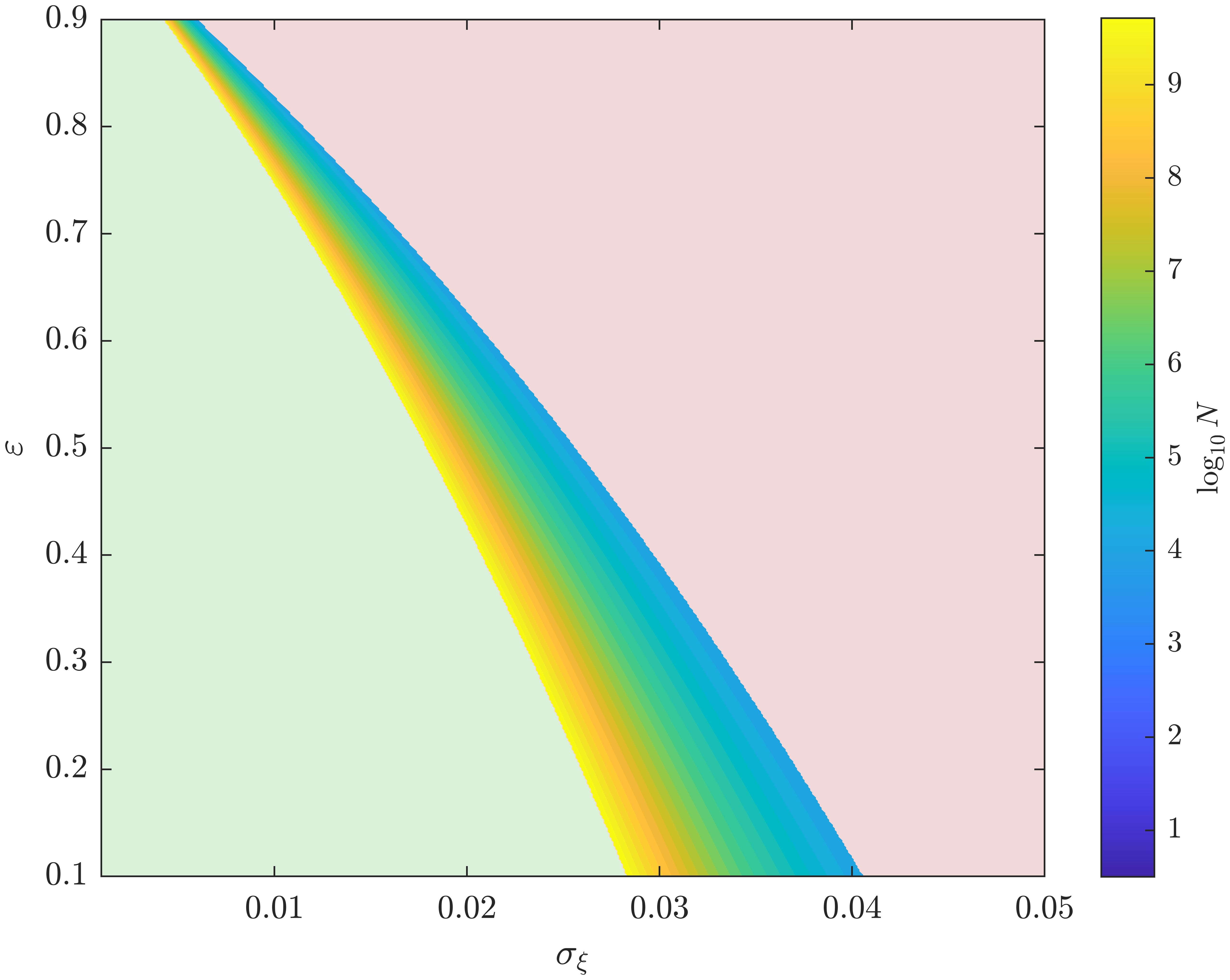}
\end{subfigure}%
\begin{subfigure}{.26\textwidth}
  \centering
  \includegraphics[width=1\linewidth]{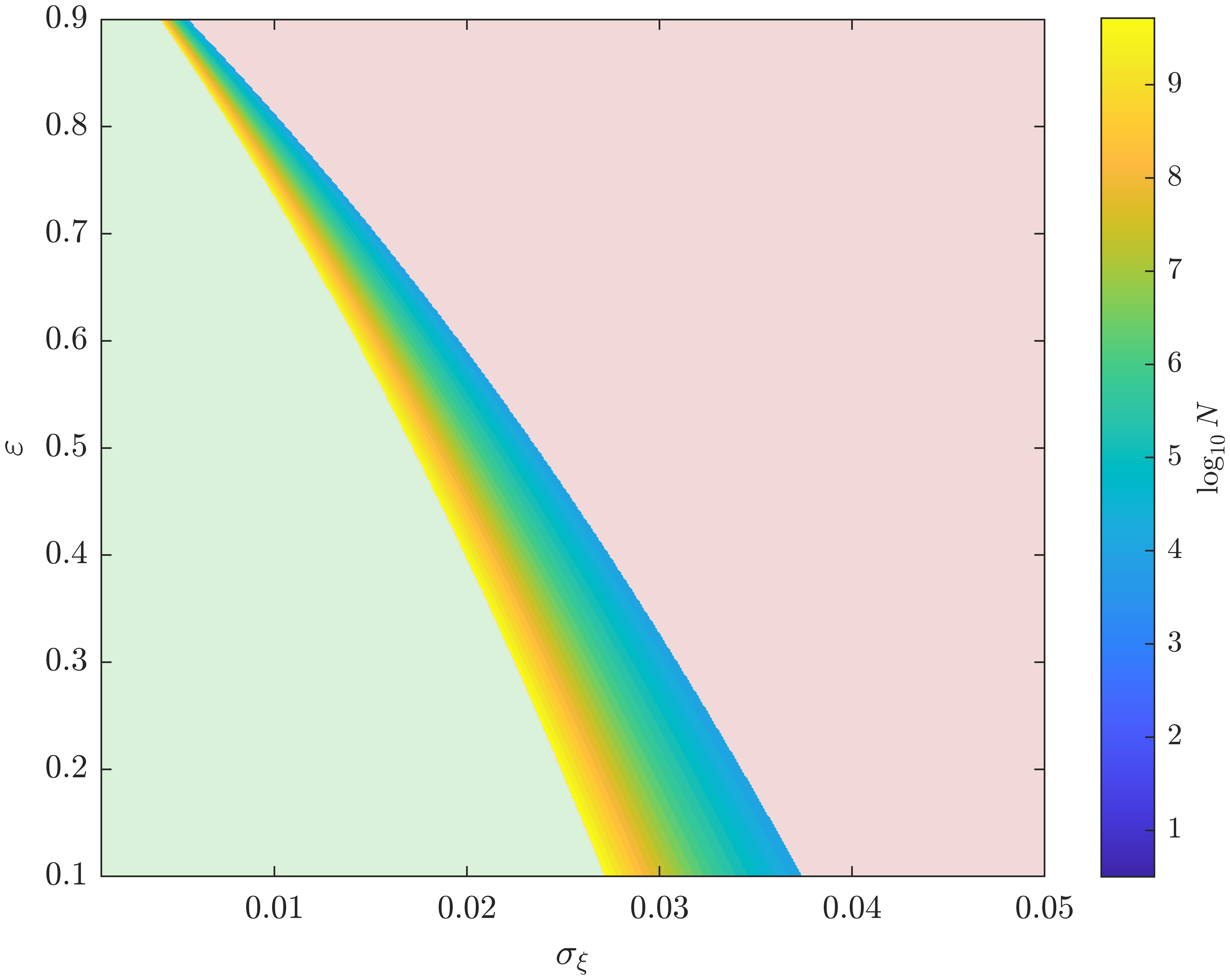}
\end{subfigure}
\begin{subfigure}{.26\textwidth}
  \centering
  \includegraphics[width=1\linewidth]{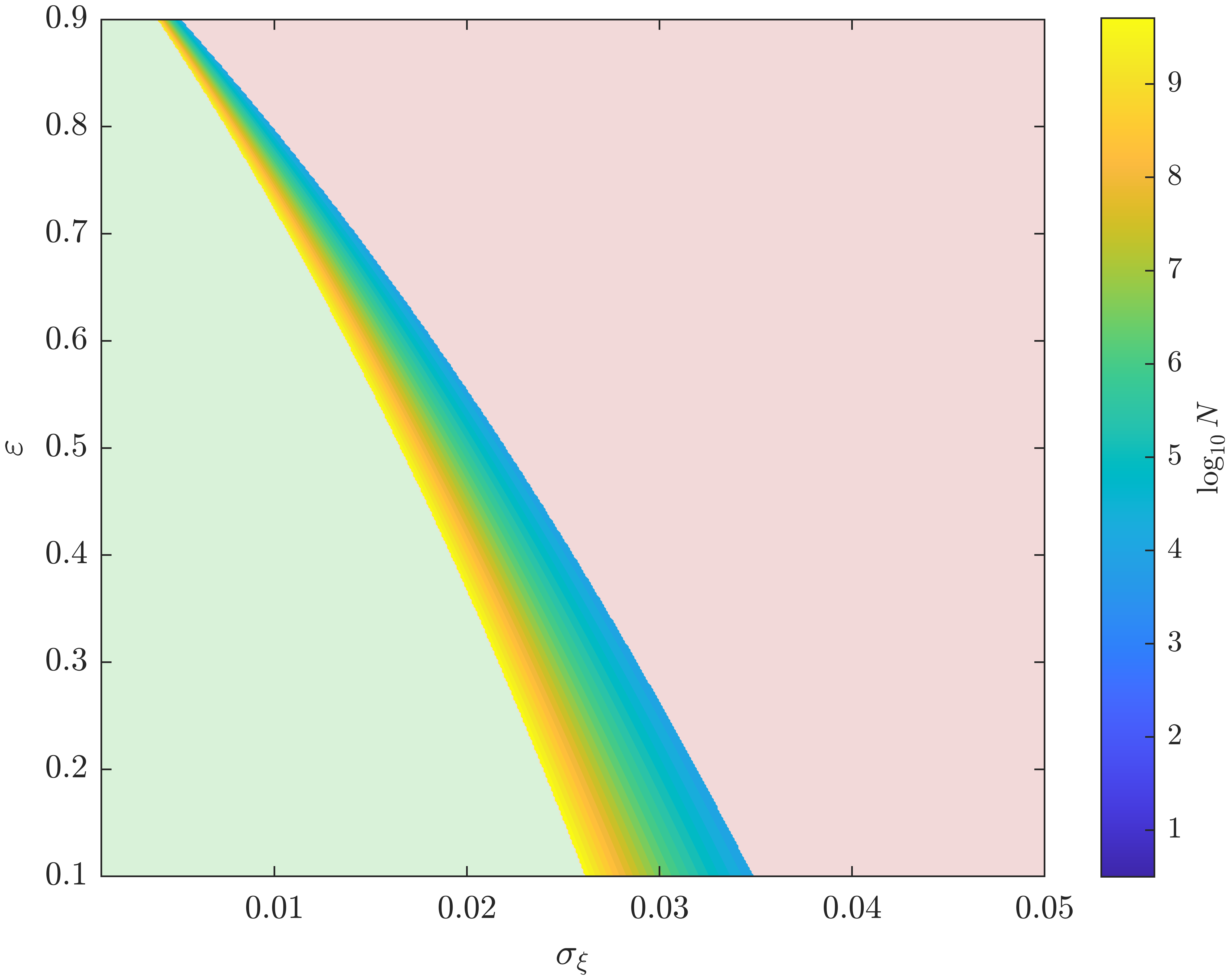}
\end{subfigure}%
\begin{subfigure}{.26\textwidth}
  \centering
  \includegraphics[width=1\linewidth]{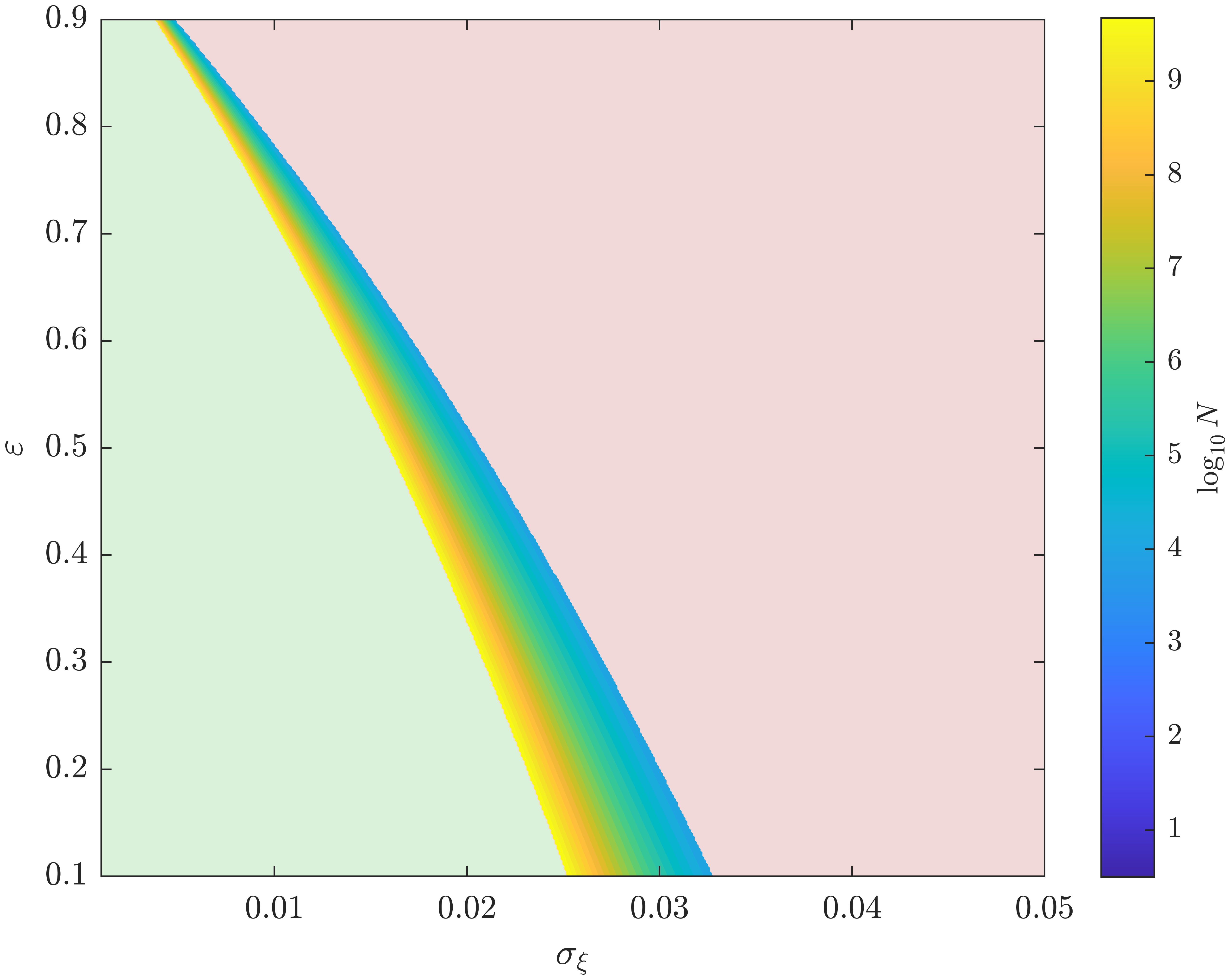}
\end{subfigure}
\begin{subfigure}{.26\textwidth}
  \centering
  \includegraphics[width=1\linewidth]{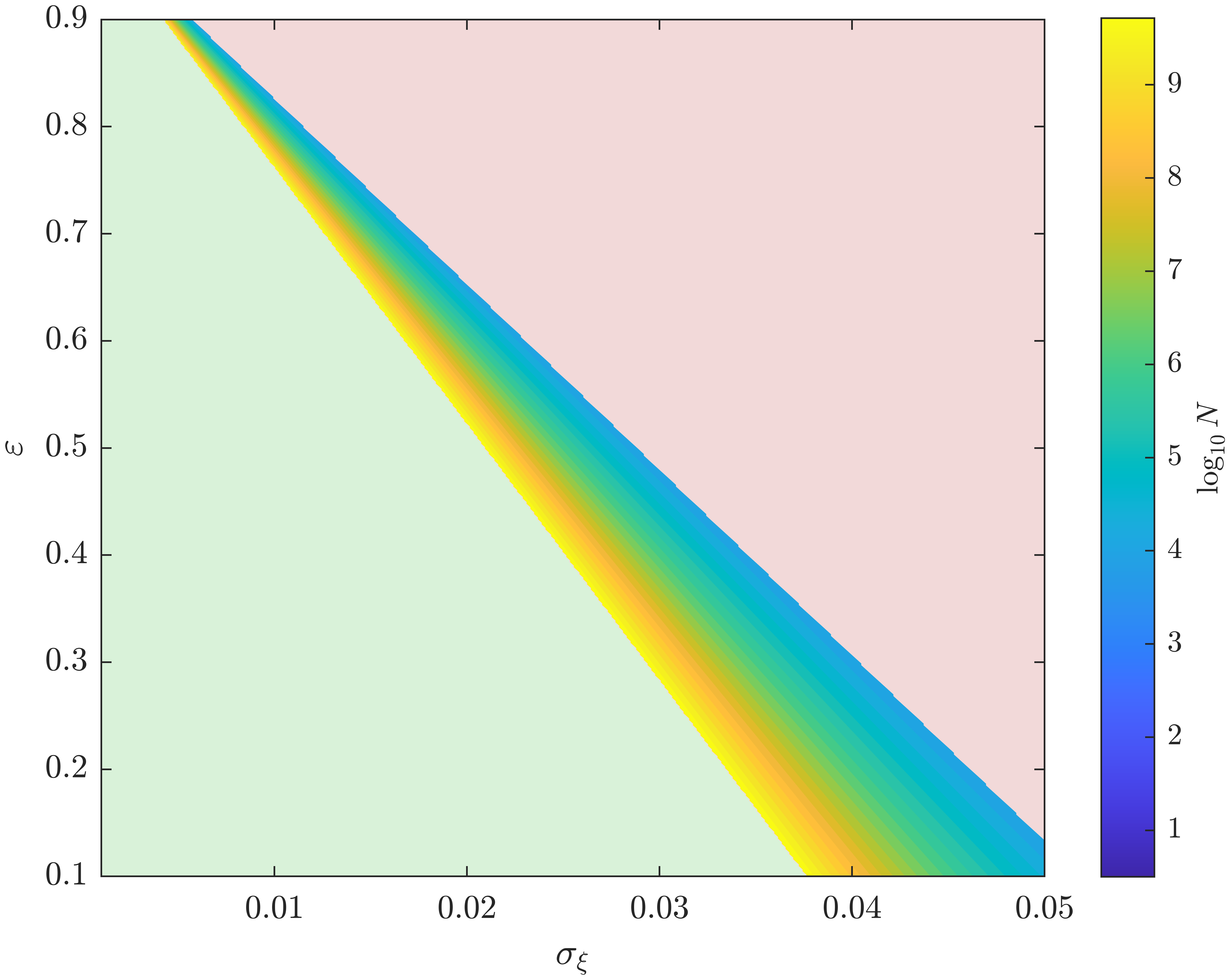}
  \caption{\(\eta = 10^{-2} \)}
  \label{fig:sfig1}
\end{subfigure}%
\begin{subfigure}{.26\textwidth}
  \centering
  \includegraphics[width=1\linewidth]{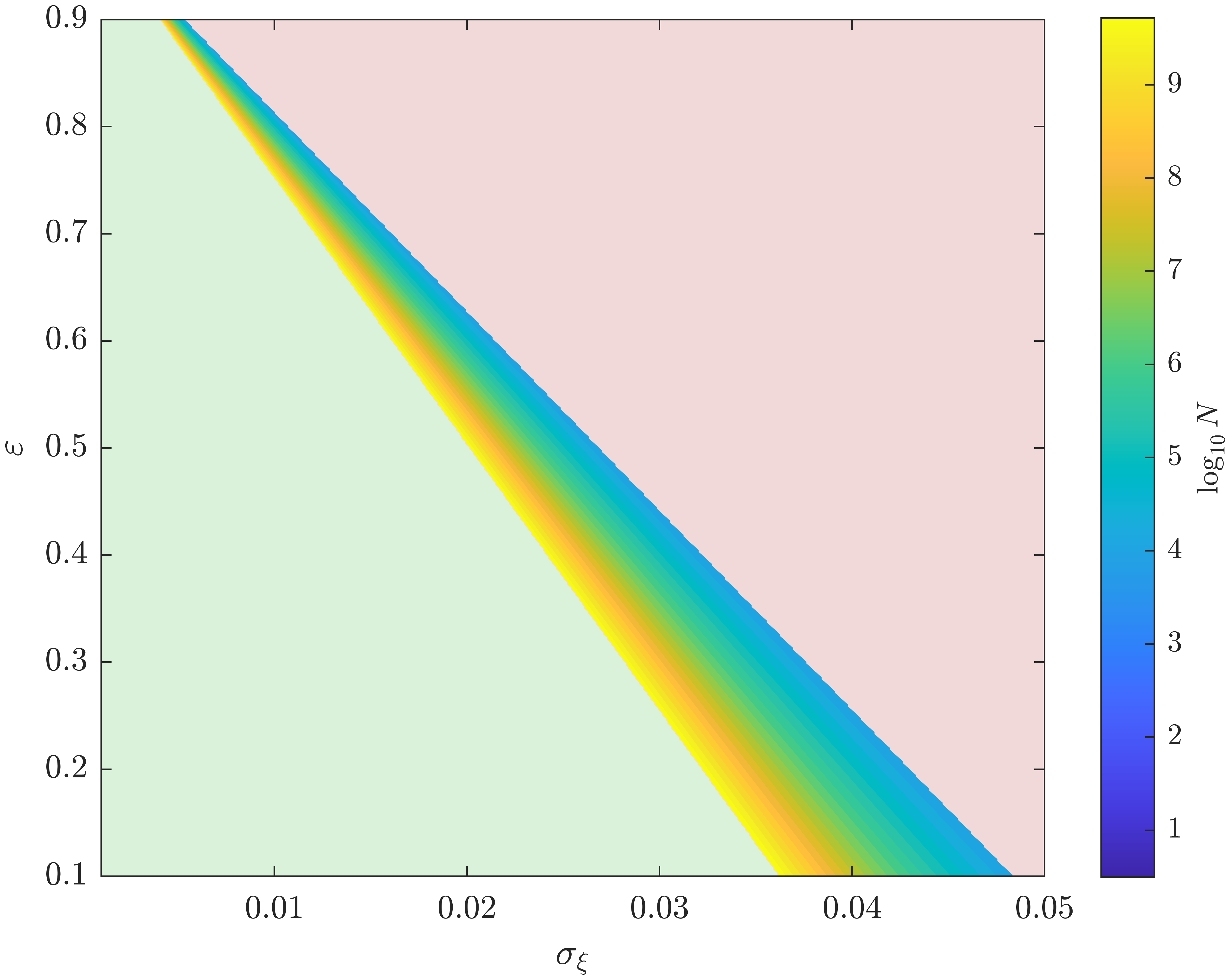}
  \caption{\(\eta = 10^{-3} \)}
  \label{fig:sfig2}
\end{subfigure}
\begin{subfigure}{.26\textwidth}
  \centering
  \includegraphics[width=1\linewidth]{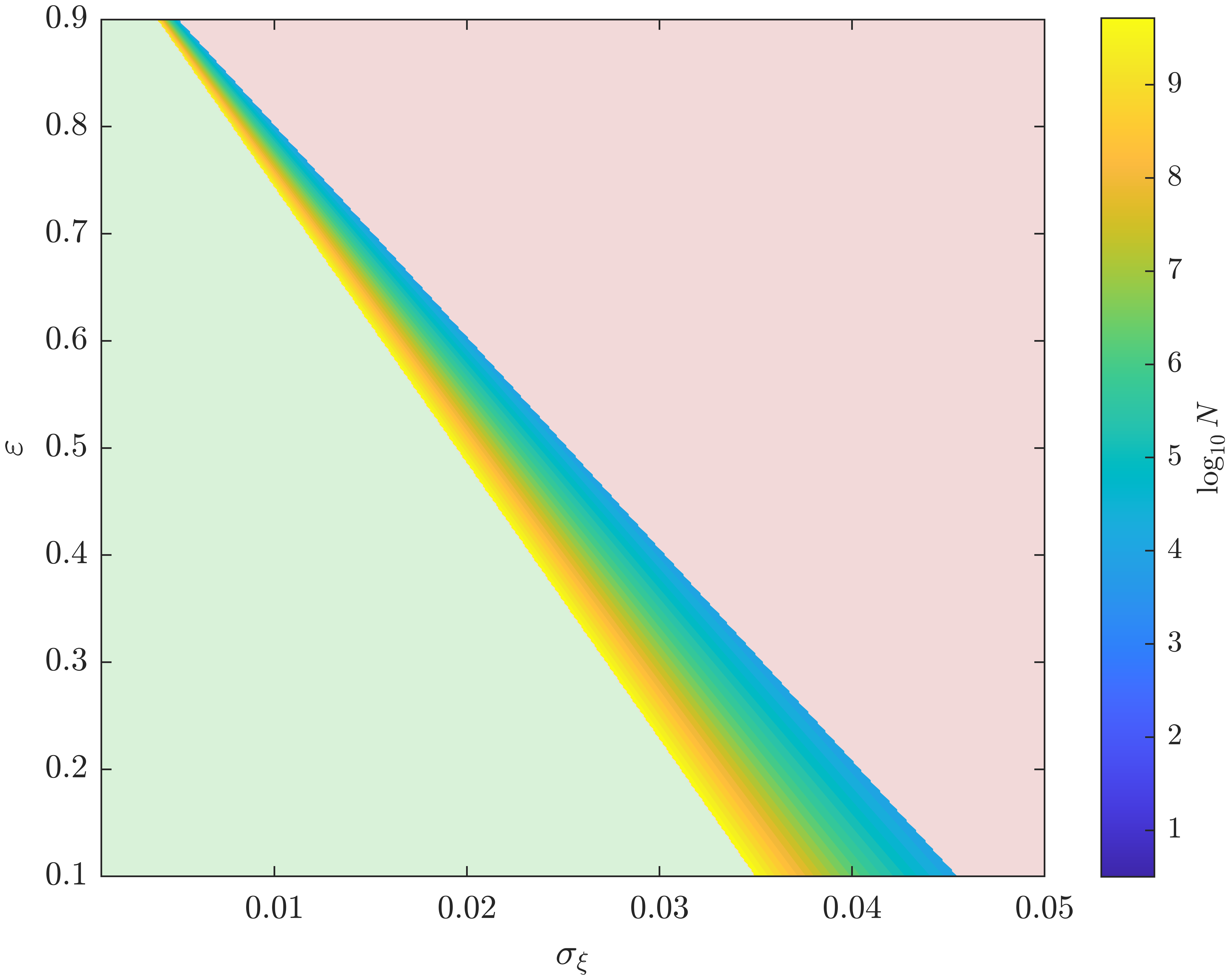}
  \caption{\(\eta = 10^{-4} \)}
  \label{fig:sfig1}
\end{subfigure}%
\begin{subfigure}{.26\textwidth}
  \centering
  \includegraphics[width=1\linewidth]{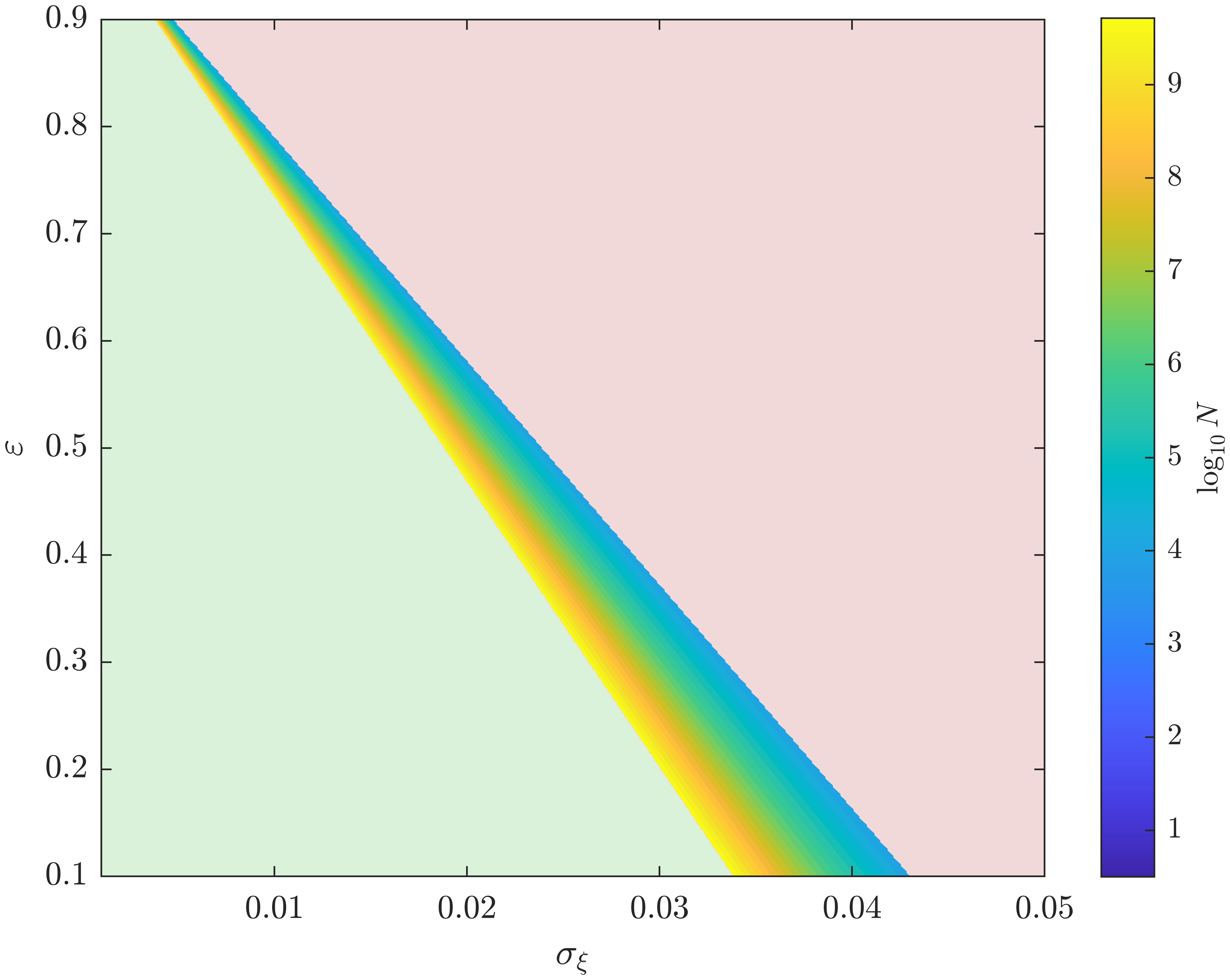}
  \caption{\(\eta = 10^{-5} \)}
  \label{fig:sfig2}
\end{subfigure}
\caption{Contour plots of the right-hand side of \eqref{eq:cond_on_N} (first row) and \eqref{eq:lin_uppbonN} (second row) as functions of \(\sigma_\xi\) and \(\epsilon\). In the pink region the bound falls below \( 10^3 \), while in the light green region it exceeds \( 10^{10} \). We remark that, in ADC applications, \( \sigma_{\xi} \) represents the average deviation from the ideal sampling instants \(nT_s\), so that \( \sigma_\xi  \ll T_s \) and \( \sigma_\xi / T_s \) is typically of the order of \(1 \%\) or less. Higher values of \( \sigma_\xi \) may correspond to harsher scenarios, but more likely reflect poor hardware designs.}
\label{fig:sidak_prob}
\end{figure}

\section{Conclusions}
This note examined the sampling properties of the AR(1)-jittered lattice \( t_n = n + \xi_n \) at two scales. At the infinite-dimensional scale, Lemmas \ref{lemma:exp_decay} and \ref{lemma:exp_decay_AR1} show that the jittered set attains the critical Nyquist density both in expectation and almost surely, yet Theorem \ref{thm:ar_jitt_notstable} shows that, with probability one, it fails to be a stable sampling set for \( \mathrm{PW}_{[-1/2,1/2]} \). The mechanism behind this failure, adapted from \cite{bass_groechenig}, exploits the ergodicity of the shift on the AR(1) path to locate, almost surely, an arbitrarily long block on which the jitter aligns with the zeros of a fixed bandlimited function. At the finite-dimensional scale relevant to ADC applications, Propositions \ref{prop:prob_lb_sigmamin} and \ref{prop:linearized_prob_lb} give explicit, non-asymptotic conditions on the sample size \( N \) under which the jittered sinc sampling matrix, and its first-order Taylor surrogate, respectively, is well-conditioned with high probability.

\newpage
\section*{Declaration of generative AI and AI-assisted technologies in the writing process}

During the preparation of this work, the authors occasionally used Anthropic's Claude as interlocutor for discussions of mathematical arguments, exposition, and English grammar. All mathematical content, proofs, results, code, and conclusions are the authors' own, as is responsibility for any errors.

\end{document}